\documentclass[sigconf, nonacm]{acmart}

\usepackage{pvldb}

\renewcommand\vldbdoi{XX.XX/XXX.XX}
\renewcommand\vldbpages{XXX-XXX}
\renewcommand\vldbavailabilityurl{https://github.com/dtak-upc/DIADA} 

\usepackage{makecell}
\usepackage{multirow}
\usepackage{algorithm}
\usepackage[noend]{algpseudocode}

\usepackage{todonotes}
\usepackage{enumitem}
\usepackage{pifont}

\newcommand{\sys}{DIADA}
\makeatletter
\newcommand{\stitle}[1]{\noindent\textbf{#1\@addpunct{.}}}
\newcommand{\ititle}[1]{\noindent\emph{#1\@addpunct{.}}}
\makeatother
\newcommand{\cmark}{\ding{51}}
\newcommand{\pmark}{$\circ$}
\newcommand{\xmark}{\ding{55}}

\theoremstyle{definition} 

\newtheorem{problem}{Problem}
\newtheorem{assumption}{Assumption}
\newtheorem{proposition}{Proposition}

\begin{document}
\title{DIADA: Automatic Data Composition in Data Lakes}

\author{Marc Maynou}
\orcid{-}
\affiliation{%
  \institution{Universitat Politècnica de Catalunya}
  \city{Barcelona}
  \country{Spain}
}
\email{marc.maynou@upc.edu}

\author{Albert Martin}
\orcid{-}
\affiliation{%
  \institution{Universitat Politècnica de Catalunya}
  \city{Barcelona}
  \country{Spain}
}
\email{albert.martin.g@upc.edu}

\author{Sergi Nadal}
\orcid{-}
\affiliation{%
  \institution{Universitat Politècnica de Catalunya}
  \city{Barcelona}
  \country{Spain}
}
\email{sergi.nadal@upc.edu}

\author{Anna Queralt}
\orcid{-}
\affiliation{%
  \institution{Universitat Politècnica de Catalunya}
  \city{Barcelona}
  \country{Spain}
}
\email{anna.queralt@upc.edu}

\author{Oscar Romero}
\orcid{-}
\affiliation{%
  \institution{Universitat Politècnica de Catalunya}
  \city{Barcelona}
  \country{Spain}
}
\email{oscar.romero@upc.edu}

\begin{abstract}
Data lakes contain a plethora of attributes scattered across many tables that, when combined, provide enhanced assets for data analysis. Nonetheless, deciding which attributes belong together in meaningful relations remains a manual, per-task effort. Merging by joinability alone provides no guarantees regarding attribute relevance, while selecting features against a single target discards attributes useful to other tasks. To address this gap, we introduce the data composition problem: organizing a fragmented, heterogeneous lake into meaningful relations, agnostic of any particular analytical task so that the resulting organization can serve as a common foundation for diverse downstream analyses. We propose \sys, a composition system that employs multivariate dependence as the criterion for assessing the meaningfulness of a relation and approximates it by hypothesizing independence among attributes and identifying those sets that violate this hypothesis. To do so, we map the attributes to a predicate space, forming a lattice under inclusion and mining those predicate sets that exhibit dependence among their constituents. We contribute a dedicated and scalable algorithm to effectively explore this space, outscaling classical algorithms for mining relationships, thus discovering dependencies that would otherwise be impractical to identify. We demonstrate that applying a single data composition process benefits diverse potential downstream tasks. This is the result of providing a subset of low-noise, statistically relevant attributes that increases the confidence that detected patterns are grounded in real relationships, thus preventing common modeling issues in large-scale environments.

\end{abstract}

\maketitle

\vldbtopmatter

\section{Introduction} \label{sec:introduction}
Modern data-driven organizations store their data in massive, heterogeneous repositories known as data lakes \cite{hai2023data}. Their flexible ingestion policies, however, often turn these repositories into poorly organized \textit{data swamps} where finding and preparing relevant data becomes cumbersome \cite{nargesian2019data}. To mitigate this problem, organizations structure their data management pipelines into successive curation zones (e.g., the notions of bronze, silver, and gold layers used in the industry \cite{databricks2023medallion}) where data assets undergo increasing levels of refinement before being made available for downstream analytics \cite{armbrust2021lakehouse}. 
However, automatically organizing the resulting assets to favor their subsequent use is a challenging task. We focus on this particular stage of the process: once data from different sources have been homogenized, how their attributes should be integrated into relevant relations for downstream analysis. We call this problem \textbf{data composition}: organizing the attributes of a fragmented data lake into \textit{meaningful} relations.

Consider Figure \ref{fig:discovery_example}a, which depicts a simplified data lake whose attributes are color coded according to their relevance to a specific target variable. For example, both the \textit{num\_rooms} of a property and the \textit{tax\_rate} of its neighborhood are likely to be informative to predict its \textit{market\_value}, whereas \textit{agent\_phone\_num} is not related to either \textit{market\_value} or any other attribute. A naive approach to structure the data would run a data discovery task to identify candidate join attributes and perform a multiway join (Figure \ref{fig:discovery_example}b) that combines potentially useful information scattered across datasets into a single relation. However, data discovery may propose spurious joins, such as \textit{property\_id} and \textit{station\_id}, alongside meaningful ones, such as the join on \textit{postal\_code}. Consequently, the relation supplied to an ML pipeline predicting, say, \textit{market\_value}, would contain inert attributes (gray) and an entire group of weather attributes (blue) irrelevant to this particular prediction task. 
Ideally, the joined table should be reorganized as shown in Figure \ref{fig:discovery_example}c, with intrinsically related attributes composed into separate relations. 

\begin{figure*}[h]
	\begin{center}
		\includegraphics[width=1\linewidth]{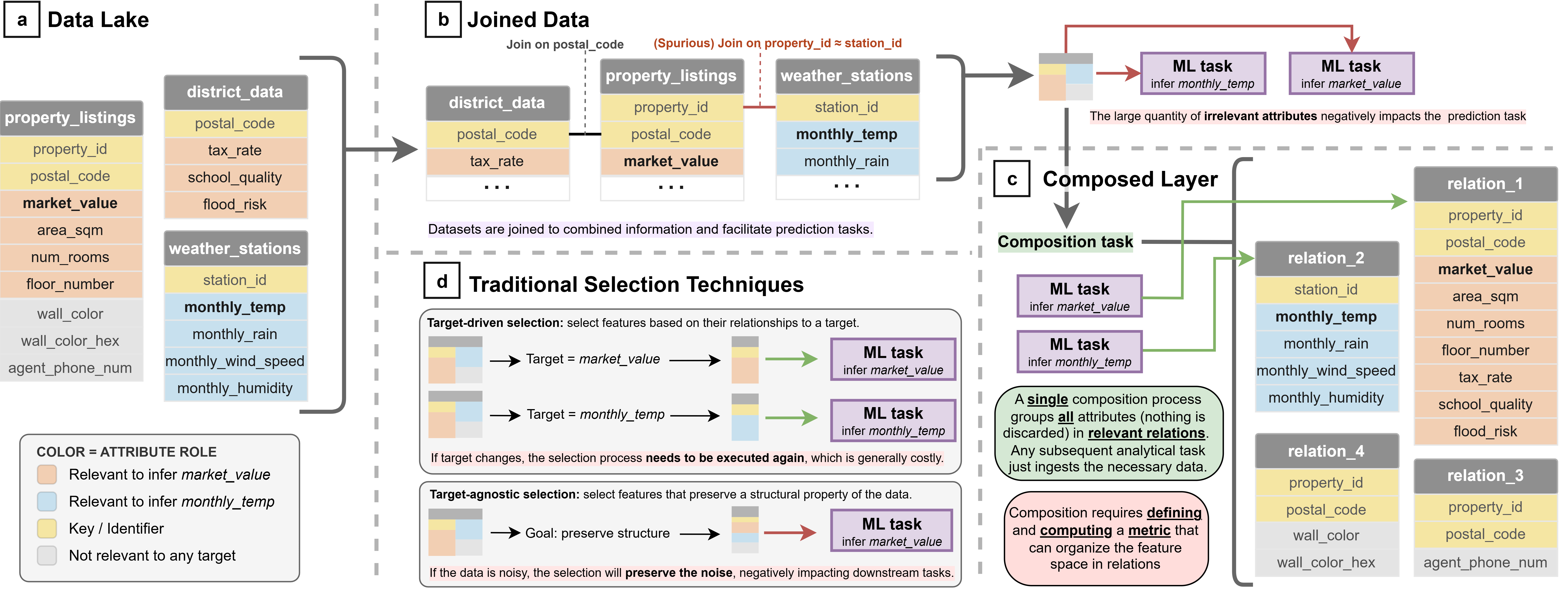}
        \Description{Diagram in three panels showing a fragmented data lake, the join candidates it produces including a spurious join, and the composed relations grouping attributes by multivariate dependence.}
        \caption{The data composition problem illustrated. A data lake (a) contains tables that can be used to satisfy predictive tasks. A multiway join is executed for these tables (b) to place all relevant information in a single asset, but a spurious join introduces unhelpful features in the dataset. Ideally, joined data should be separated into specific relations (c) to benefit any potential downstream task. In practice, this process is either done manually (costly and prone to errors) or substituted by automated techniques (d), which either select features for a specific target (and discard the rest) or preserve noisy data.}
		\label{fig:discovery_example}
	\end{center}
\end{figure*}

The task of data composition is mainly performed in a manual, slow and unreliable fashion by domain experts through the execution of handcrafted joins and partitions \cite{abadi2020seattle}. Alternatively, there are many research fields that propose automated methodologies to organize the attributes and relations of a data repository, each presenting a different way of characterizing the \textit{meaningfulness} of the final relations. Nonetheless, all have important limitations towards performing data composition tasks. Functional dependency (FD) discovery identifies attribute sets that determine others \cite{huhtala1999tane}, which can guide the separation of datasets into normalized relations. However, these constraints are deterministic, and the classical approaches to detect them are syntactic \cite{DBLP:journals/pvldb/PapenbrockEMNRZ15} and scale poorly \cite{DBLP:conf/sigmod/IlyasMHBA04}, making them not suitable for data lakes' heterogeneity and large size. Data discovery \cite{fernandez2018seeping, nargesian2019data} presents semantic and scalable techniques to identify tables that are related and can be combined, but does not assess the relevance of the attributes in the relation. This leads to noise not being filtered out, as the spurious join in Figure \ref{fig:discovery_example}b illustrates. To address this, we can introduce techniques to select the relevant subset of attributes out of the combined data. Figure \ref{fig:discovery_example}d exemplifies two main families of such approaches. On one hand, the selection of attributes can be guided by the association to a predictive target (i.e., target-driven, such as data augmentation and supervised selection \cite{DBLP:journals/pvldb/ChepurkoMZFKK20, DBLP:journals/tmlr/CappuzzoCLPV25, DBLP:conf/icde/IonescuVB0K24}). However, assessing attribute relevance against a given target discards whatever is not associated with it, yet these deleted attributes might provide information that a different task may need (e.g., blue features for predicting \textit{monthly\_temp}). Thus, the costly selection process must be executed anew for every new target. On the other hand, target-agnostic approaches (e.g., unsupervised feature selection \cite{mitra2002unsupervised, solorio2020review}) circumvent this problem by not conditioning attribute importance on a target, instead seeking to preserve some structural property of the data (e.g., cluster structure). However, as they do not explicitly assess the degree of shared information across attributes, uninformative attributes may be preserved if they contribute to the original and potentially noisy structure being captured.
At the core of this problem lies the need for assessing the \textit{relatedness} of sets of attributes and, therefore, guide the separation of features into strongly intra-related groups. We argue that relatedness is, fundamentally, a question of shared information: two attributes are related when knowledge of one changes what is expected of the other. This notion extends naturally from pairs of attributes to arbitrary sets through their multivariate dependence \cite{watanabe1960information}, that is, the extent to which each attribute is conditioned by all the others jointly. This dependence is precisely what analytical tasks look for, as it allows patterns to be defined and exploited to extract insight or perform inference. A meaningful relation is, therefore, one whose attributes exhibit strong multivariate dependence. However, metrics that fully compute such a notion (i.e., total correlation \cite{watanabe1960information} and its variants \cite{DBLP:journals/tit/McGill54}) are exponential to estimate due to the combinatorial explosion of assessing increasingly large sets of features jointly \cite{bell2003co}. Thus, current methods to compute multivariate dependence either fail to scale or opt exclusively for pairwise dependencies. Take a property's \textit{floor\_number}, its district's \textit{flood\_risk}, and its \textit{market\_value} from Figure \ref{fig:discovery_example}: the marginal effects of both \textit{floor\_number} and \textit{flood\_risk} over \textit{market\_value} are likely to be minimal and, hence, missed. Nonetheless, ground floors are cheaper where flooding is a risk. Thus, only when both are observed does the strong dependence with \textit{market\_value} appear. 

In this work we aim at providing an approach that efficiently approximates multivariate dependencies across sets of attributes at scale. This methodology is encapsulated in \sys, the first system that performs automatic data composition based on evaluating the amount of shared information across sets of attributes. To do so, \sys~hypothesizes that all attributes in the relation are independent, and its goal is to detect which groups of attributes violate that independence hypothesis. \sys~uses an intermediate abstraction layer that decouples raw data formats from multivariate dependency analysis, allowing dependencies to be discovered across heterogeneous data. More precisely, every attribute is mapped to a set of predicates via a Boolean function over a chosen domain (e.g., individual tuples or pairs of tuples). The combination of these predicates results in a space that forms a lattice under inclusion. \sys~then mines this lattice to extract predicate sets whose joint probability departs from what is expected by assuming predicate independence. Such a departure indicates that the corresponding attributes are related and should be composed into the same relation. \sys, thus, approximates multivariate dependence by applying a statistical test regarding the independence of predicates.

Exploring such a lattice reduces to to the well-studied problem of mining frequent itemsets. Nonetheless, scaling this search to an entire data lake still introduces computational challenges both in the number of rows (i.e., horizontal, as there are more predicates per attribute) and columns (i.e., vertical, as there are more combinations of predicates). To address this, \sys~introduces an \textit{approximation factor}, a statistically validated ceiling on how uncertain any estimated probability may remain. Thus, \sys~performs an informed and traceable search that prunes the space by navigating \textit{only} towards predicate sets whose relatedness is still highly uncertain. Consequently, our approach provides a statistics-based definition of meaningful relations in this context, and contributes an efficient and scalable method to approximate multivariate dependencies that allow us, for the first time, to operationalize the problem of data composition.

Our experiments showcase how these optimizations allow \sys~to outscale traditional miners in the context of data lakes and discover dependencies that could otherwise be impractical to detect. We prove that in realistic scenarios \sys~correctly filters noisy attributes from relations while preserving relevant attributes, which other approaches are not able to do. We also demonstrate that not composing data prior to executing downstream tasks negatively impacts the final models, as scalable analytical pipelines are not able to effectively preserve relevant data. By executing \sys~a single time, diverse downstream models improve both their performance and generalization, regardless of their target.

\medskip

\stitle{Contributions} Our contributions are as follows:
\begin{itemize}[leftmargin=1em, itemsep=0pt]
    \item We define the \textbf{data composition problem}: how to organize a fragmented lake into meaningful relations, agnostic of any target attribute. We take multivariate dependence as the composition criterion. 
    \item We introduce an \textbf{approximate} and \textbf{scalable algorithm} that mines a predicate-set lattice to detect sets of dependent attributes, outperforming classical methods for frequent itemset mining as attributes grow, discovering dependencies that would be otherwise impractical.
    \item We present \textbf{\sys}: the first system for automatic data composition, which implements the above algorithm. We showcase that applying a single, target-agnostic composition task improves the utility and robustness of downstream modeling pipelines regardless of the target attribute, as composing generates meaningful datasets not conditioned on a given process.
    
    
\end{itemize}

\section{Related Work} \label{sec:related_work}

The task of data composition can be further specified through four requirements, which need to be simultaneously fulfilled. (R1)~\textit{Target-agnostic}: composition cannot be defined with respect to a target variable. (R2)~\textit{Multivariate}: composition must be guided by the relatedness of sets of attributes, not merely attribute pairs. (R3)~\textit{Complete}: every attribute must be placed in some relation. (R4)~\textit{Lake-scale}: executing the composition must remain feasible as the lake grows. In the remainder of this section, we review the main families of approaches that are relevant to data composition. Table~\ref{tab:related} summarizes the extent to which each family satisfies the stated requirements.

\begin{table}[]
\caption{Requirements on the composition criterion.
\cmark: satisfied, \pmark: partially, \xmark: not addressed.}
\label{tab:related}
\small
\begin{tabular}{lcccc}
\toprule
& R1 & R2 & R3 & R4 \\
\midrule
Join / union discovery            & \cmark & \xmark & \xmark & \cmark \\
Lake organization                 & \cmark & \xmark & \pmark & \pmark \\
Augmentation / supervised FS      & \xmark & \pmark & \xmark & \pmark \\
Unsupervised feature selection    & \cmark & \pmark & \pmark & \cmark \\
FD / key discovery                & \cmark & \xmark & \cmark & \xmark \\
Statistical dependency discovery  & \cmark & \xmark & \cmark & \pmark \\
Schema decomposition              & \cmark & \cmark & \cmark & \xmark \\
\midrule
\sys                              & \cmark & \cmark & \cmark & \cmark \\
\bottomrule
\end{tabular}
\end{table}

\subsection{Assembling data from a lake}

\stitle{Data discovery} Join and union discovery locate tables that can be combined with a given base table through value overlap, containment or embedding similarity \cite{fernandez2018seeping, nargesian2019data, nargesian2018table, DBLP:journals/tkde/MaynouNPFRQ26, bogatu2020dataset}. These methods answer whether two tables \emph{can} be combined, not whether the attributes they contribute form a \textit{meaningful} relation (R2, R3). Their goal is to identify related assets, not to evaluate the relevance of the attributes. Discovery is therefore the natural upstream stage for \sys, and we use it as such (Section~\ref{sec:pipeline}).

\medskip

\stitle{Lake organization} A smaller line organizes the lake itself rather than serving individual queries. Aurum builds an enterprise knowledge graph linking columns by syntactic and semantic similarity \cite{DBLP:conf/icde/FernandezAKYMS18}, and navigation-oriented approaches induce a hierarchy over tables so that users can browse a lake without knowing its contents \cite{DBLP:journals/tkde/NargesianPBZM23}. These share our objective of a single reusable organization, but they organize tables by high-level similarity, rather than through the relatedness of their attributes (R2).

\medskip

\stitle{Data augmentation and supervised selection}
Supervised feature selection aims to identify a subset of attributes that maximizes predictive performance for a defined target variable, whereas data augmentation expands the search space to an entire data lake by integrating additional data sources before feature selection \cite{DBLP:journals/pvldb/ChepurkoMZFKK20, DBLP:conf/icde/IonescuVB0K24, DBLP:journals/tmlr/CappuzzoCLPV25, cui2026tabular, guyon2003introduction}. Because relevance is defined with respect to that target, the search is repeated whenever the analytical goal changes, and attributes useful to later tasks are discarded rather than organized (R1, R3).

\subsection{Unsupervised feature selection}

Unsupervised feature selection (UFS) reduces dataset dimensionality while preserving a surrogate objective over the data
\cite{mitra2002unsupervised, solorio2020review}. Methods differ in the structure they maintain: local geometry \cite{he2005laplacian, zhao2007spectral, li2012unsupervised}, cluster assignment \cite{cai2010unsupervised, DBLP:conf/ijcai/YangSMHZ11}, matrix reconstruction \cite{zhu2015unsupervised, balin2019concrete}, and global variance \cite{lim2021pairwise, yuan2021novel}. Their main limitation is that their structure-preserving objectives assume the structure is sound to begin with. Hence, if the input contains correlated noise, the objective preserves exactly what should be removed. Nonetheless, it is the only family of methods that fulfills, to some extent, all four composition requirements. Thus, in Section \ref{sec:lima_vs_ufs}, we quantify their suitability for the composition problem.

\subsection{Discovering dependencies}

\stitle{Exact dependencies} Functional dependency (FD) discovery identifies attribute sets that determine others, and underpin normalization, cleaning and query optimization \cite{huhtala1999tane, DBLP:journals/pvldb/PapenbrockEMNRZ15}. Modern FD discovery algorithms scale to tables of realistic size, but still have combinatorial worst-case behavior (R4). Moreover, FDs express deterministic constraints: an exact FD is invalidated by a single counterexample tuple, while the absence of an FD does not imply statistical independence. Consequently, FD and key discovery do not capture the more general multivariate relatedness between attributes (R2).

\medskip

\stitle{Statistical dependency discovery} The rigidity of exact constraints can be relaxed to discover broader statistical relationships between attributes. CORDS mines correlations and soft functional dependencies from samples \cite{DBLP:conf/sigmod/IlyasMHBA04}, and Zhang et al. give a statistical account of FD discovery under noise \cite{DBLP:conf/sigmod/ZhangGR20}. Both are target-agnostic and cheap enough for large tables, but they still lack the simultaneous assessment of the joint relatedness of sets of features (R2).


\medskip

\stitle{Schema decomposition} Maimon \cite{DBLP:conf/sigmod/KenigMPSS20} mines approximate acyclic schemes using information-theoretic measures, and is the only approach meeting the first three requirements. It solves, however, the dual problem. It refines a relation that already exists into a scheme that preserves it, so the set of attributes that belong together is an input and the question is how to split it. Additionally, Maimon is limited in terms of scalability (R4), its cost being exponential in the number of attributes and reported empirically up to 30 columns.

\subsection{Defining the composition criterion}\label{sec:rw-computing}

To effectively perform composition tasks it is necessary to establish and evaluate a statistical criterion for attribute relatedness. To that end, we position \sys~against two bodies of work.

\medskip

\stitle{Multivariate measures} Total correlation \cite{watanabe1960information}, interaction information, and its variants \cite{DBLP:journals/tit/McGill54} quantify the divergence between a joint distribution and the product of its marginals, and are the measures composition would ideally adopt. Estimating them is exponential in the number of variables \cite{bell2003co}, which is precisely what places any criterion defined through them outside (R4), and why scalable methods fall back on pairwise or greedy conditional formulations \cite{brown2012conditional, vergara2014review}. Section~\ref{sec:measures} examines these measures in the detail our framework requires and derives the alternative we adopt.

\medskip

\stitle{Pattern mining} Once dependence is expressed over predicates, the search space is the itemset lattice, for which Apriori \cite{DBLP:conf/sigmod/AgrawalIS93} and FP-Growth \cite{DBLP:conf/sigmod/HanPY00} are the canonical exploration algorithms. Sampling-based variants bound the deviation of the estimated supports \cite{DBLP:conf/vldb/Toivonen96, DBLP:conf/cikm/RiondatoDFU12}. We reuse the lattice but not the pruning principle, since we leverage independence rather than frequency. Certain approaches carry error control through significance testing with family-wise error correction \cite{DBLP:journals/ml/Webb07} and permutation-based methods \cite{DBLP:conf/kdd/PellegrinaV18}.
\section{A Statistical Criterion for meaning}
\label{sec:multivariate_dependencies}


This section develops the statistical criterion \sys~employs to quantify the \textit{meaningfulness} of a set of attributes. First, Section~\ref{sec:measures} fixes notation and the independence baseline. Then, Section~\ref{sec:predicate_spaces} introduces the predicate space that decouples heterogeneous data from multivariate dependency analysis, Section~\ref{sec:cohesion} identifies meaningfulness as a statistical hypothesis test with independence as its null, and Section~\ref{sec:problem_statement} formalizes the data composition problem.

\subsection{Preliminaries}\label{sec:measures}

Let $R(A_1, A_2, \dots, A_m)$ be a relational schema with attributes $A_1, A_2, \allowbreak \dots, A_m$, each taking values over domains $D_1, D_2, \dots, D_m$, respectively. A relation instance over this schema
is a finite set $r(R) \subseteq D_1 \times D_2 \times \cdots \times D_m$, where each tuple $t \in r(R)$ is a function mapping every attribute to a value in its corresponding domain, i.e., $t(A_i) \in D_i$.
A core assumption in statistics and machine learning is that the tuples $t = (a_1, a_2, \dots, a_m)$ of a relation are drawn from an unknown multivariate distribution over $D_1 \times D_2 \times \cdots \times D_m$~\cite{bishop2006pattern}.
When the attributes $A_1, A_2, \dots, A_m$ are independent, the joint probability of drawing any given tuple is reduced to the product of marginal attribute probabilities:
\[
  p(A_1 = a_1, A_2 = a_2, \dots, A_m = a_m) = \prod_{i} p(A_i = a_i).
\]
The goal of multivariate analysis is to discover sets of attributes that deviate from this independence. As argued in Section \ref{sec:introduction}, these deviations are what makes a relation worth preserving in data composition, as the shared information between attributes defines patterns that can be exploited by downstream tasks. Computing exact measures of multivariate dependence is unfeasible in the context of data lakes, so the criterion we require cannot be defined through them. It must instead make decisions under sampling while properly accounting for the associated uncertainty, be data-agnostic so as to operate on heterogeneous data, and provide objective measures of association that do not rely on arbitrary thresholds. The remainder of this section constructs this criterion.


\subsection{Predicate spaces}\label{sec:predicate_spaces}

Our framework generalizes statistical hypothesis testing to large, heterogeneous data lakes through an intermediate abstraction layer that decouples raw data from dependency analysis. It decomposes the problem of multivariate dependency discovery in two: 
\begin{itemize}[leftmargin=1em, itemsep=0pt]
    \item \textbf{Predicate space construction}: Given a relational schema of the form $R(A_1, A_2, \dots, A_m)$, generate an abstraction layer known as the predicate space.
    \item \textbf{Predicate association mining}: Given the predicate space and a relation instance $r(R)$, identify the multivariate dependencies present in the data.
\end{itemize}

This decomposition yields two complementary advantages. From a data management perspective, data ingestion does not need to adapt the data format as required by downstream dependency analysis, which is of particular value in heterogeneous data lakes. From an analytical perspective, all dependency analyses operate over a common abstraction layer, enabling a broad range of established techniques to be applied independently of the underlying domain diversity. We address the first subproblem here, and turn a relation's meaningfulness into a statistical question in Section~\ref{sec:cohesion}.

We define a predicate $P_i \colon \Omega \to \{0,1\}$ as any function mapping elements of a space $\Omega$, called the predicate domain of $P_i$, to a binary outcome. For each $A_i \in R$, we denote by $P(A_i) = \{P_i, P_j, \dots\}$ a set of predicates defined entirely in terms of that attribute. The predicate space is then the union of all such sets, $P = \bigcup_{A_i \in R} P(A_i)$, and its construction is the goal of this subproblem. The first row of Table~\ref{tab:predSpaces} illustrates a predicate space whose domain is individual tuples, and demonstrates the flexibility of this framework in accommodating heterogeneous data. Categorical and binned numerical attributes are handled naturally. Beyond these, the framework allows the definition of concise, semantically meaningful predicates to support complex types such as images or text, which can be incorporated into the multivariate analysis without special treatment.

\begin{table}[]
    \centering

    \renewcommand{\arraystretch}{1.2} 
    \caption{Predicate spaces for the predicate domain of tuples $\Omega=r(R)$ and tuple pairs $\Omega=r(R)\times r(R)$.
    }

    \resizebox{\columnwidth}{!}{
    \begin{tabular}{|c|c|c|c||c|c|}
    \hline
    & \multicolumn{5}{|c|}{Attribute and Type} \\ \cline{2-6}
    & Name & Salary & Tax & Photo & Message \\ \cline{2-6}
    & String & Real & Real & Image & Text \\ \hline

    $\Omega=r$
    & \footnotesize$\begin{matrix}="Alice"\\has\ hyphen\end{matrix}$
    & \footnotesize$\begin{matrix}\in[0,8000]\\>8000\end{matrix}$
    & \footnotesize$\begin{matrix}=0\%\\>0\%\end{matrix}$
    & \footnotesize$\begin{matrix}is\ person?\\in\ color?\end{matrix}$
    & \footnotesize$\begin{matrix}size<256\\positive\end{matrix}$
    \\ \hline

    $\Omega=r\times r$
    & \footnotesize$\begin{matrix}=Name\\\ne Name\end{matrix}$
    & \footnotesize$\begin{matrix}>Salary\\<Salary\end{matrix}$
    & \footnotesize$\begin{matrix}>Tax\\<Tax\end{matrix}$
    & \footnotesize$\begin{matrix}\sim Hue\ spectrum\\\sim Embedding\end{matrix}$
    & \footnotesize$\begin{matrix}\sim Vocabulary\\\sim Embedding\end{matrix}$
    \\ \hline

    \end{tabular}
    }

    \renewcommand{\arraystretch}{1} 
    \label{tab:predSpaces}
\end{table}

The principal limitation of projecting data into binary predicates is the inevitable loss of information, making the approach unsuitable for tasks requiring a detailed characterization of dependencies. However, since the framework aims to determine whether a dependency exists rather than quantify its nature, this trade-off is acceptable. Moreover, the abstraction extends beyond individual tuples. The second row of Table~\ref{tab:predSpaces} shows how it generalizes to pairs of tuples, yielding new spaces that have proven useful in other fields~\cite{chu2013discovering,pena2019discovery}. Our implementation instantiates the pair-of-tuples domain, whereas the non-tabular predicates in Table~\ref{tab:predSpaces} illustrate the broader reach of the abstraction and are left to future work.

\subsection{Meaningfulness as a hypothesis test}\label{sec:cohesion}

To determine multivariate dependencies, the framework employs a statistical test assuming complete attribute independence as the null hypothesis:
\[
  \begin{split}
    H_0\colon &\quad
      p(A_1 = a_1, A_2 = a_2, \dots, A_m = a_m)
      = \prod_{i} p(A_i = a_i), \\
    H_1\colon &\quad
      p(A_1 = a_1, A_2 = a_2, \dots, A_m = a_m)
      \neq \prod_{i} p(A_i = a_i).
  \end{split}
\]
Let $p(P_i)$ denote the probability that predicate $P_i$ is satisfied when evaluated on a uniformly drawn element from its predicate domain. Under the null hypothesis, the joint probability of any set of predicates drawn from distinct attributes satisfies the factorization property:
\begin{equation}\label{eq:jointIndep}
  p(P_i \land P_j \land \cdots) = p(P_i) \cdot p(P_j) \cdots
\end{equation}
for all $P_i \in P(A_x),\, P_j \in P(A_y), \dots$ with $A_x, A_y,
\dots$ pairwise distinct. 
This implies that for any set of predicates $P'\subseteq P$, we have conditional independence, where probabilities are not affected by conditioning on fewer predicates:
\begin{equation}\label{eq:condIndep}
  p(P_i \mid P' \setminus \{P_i\}) = p(P_i \mid Q \setminus \{P_i\})
\end{equation}
for all $P_i \in P'$ and all $Q \subset P'$ with $P_i \in Q$.

The goal of this subproblem is to enumerate sets of predicates that deviate from this independence, thus rejecting the null hypothesis and implying the existence of a relationship between their respective attributes. This is achieved by statistically modeling conditional distribution terms like $p(P_i \mid P_j)$ and $p(P_i)$, and evaluating the likelihood of their equality under the null hypothesis. Two propositions ground this reasoning.

\begin{proposition}[Soundness of the abstraction]\label{prop:soundness}
Let $P_1,\dots,P_l \in P$ be predicates over distinct attributes $A_{x_1},\dots,A_{x_l}$. If $p(P_1 \land \dots \land P_l) \neq \prod_i p(P_i)$, then $A_{x_1},\dots,A_{x_l}$ are not mutually independent. The converse does not hold.
\end{proposition}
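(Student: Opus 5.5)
We prove the forward direction by contraposition: assuming $A_{x_1},\dots,A_{x_l}$ are mutually independent, we show that $p(P_1 \land \dots \land P_l) = \prod_i p(P_i)$. The key observation is that each $P_i \in P(A_{x_i})$ is defined entirely in terms of $A_{x_i}$. When the predicate domain is individual tuples, this means $P_i(t) = f_i(t(A_{x_i}))$ for some measurable $f_i \colon D_{x_i} \to \{0,1\}$. Hence the event $\{P_i = 1\}$ equals $\{A_{x_i} \in S_i\}$ with $S_i = f_i^{-1}(1) \subseteq D_{x_i}$. Mutual independence of the attributes means their joint law factorizes over product sets, that is, over the pointwise factorization of $H_0$ summed or integrated over $S_1\times\cdots\times S_l$. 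This gives
\[
p(P_1\land\dots\land P_l) = p(A_{x_1}\in S_1,\dots,A_{x_l}\in S_l) = \prod_i p(A_{x_i}\in S_i) = \prod_i p(P_i),
\]
which is exactly the factorization of Equation~\eqref{eq:jointIndep}. For the pair-of-tuples domain $\Omega = r\times r$, the same argument applies with $P_i$ a function of $(t(A_{x_i}), t'(A_{x_i}))$. The two tuples are drawn independently from the same distribution, so the pairs $(A_{x_i}, A'_{x_i})$ are mutually independent across $i$ whenever the attributes are, and the factorization again follows. Taking the contrapositive yields the stated implication.

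For the failure of the converse, we give an explicit counterexample with $l=2$ and one predicate per attribute. Let $A_1, A_2$ be uniform on $\{0,1,2,3\}$ with a dependent joint law, and take the predicates $P_1 := [A_1 \in \{0,1\}]$ and $P_2 := [A_2 \in \{0,1\}]$. We choose the joint law so that it is not a product measure, yet $p(P_1\land P_2) = \tfrac14 = p(P_1)p(P_2)$. One concrete choice is to concentrate mass so that $A_2 = A_1 \oplus 2$ on half the rows and $A_2 = A_1$ on the other half. We then verify that $p(A_1 = 0, A_2 = 1) = 0 \neq \tfrac1{16}$, so the attributes are dependent while the coarse predicates factorize.

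An even simpler route uses the XOR construction: let $A_1, A_2$ be independent fair bits, $A_3 = A_1\oplus A_2$, and take $P_i := [A_i = 1]$. The triple $A_1, A_2, A_3$ is not mutually independent. However, $p(P_1\land P_2\land P_3) = 0$ while the product of marginals is $\tfrac18$, so this example fails as a counterexample. We therefore use the first construction, where predicate coarsening hides the dependence.

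The main subtlety is making ``defined entirely in terms of that attribute'' precise as measurability with respect to $\sigma(A_{x_i})$, and stating this explicitly. It matters most for the tuple-pair domain, where we must note that the two draws are independent. With that formalization in place, the remaining steps are routine.
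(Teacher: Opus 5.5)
Your proposal is correct and takes the same route as the paper: contraposition from the fact that each $P_i$ is a function of $A_{x_i}$ alone, so independence of the attributes forces the predicate events to factorize, and the converse fails because predicates are lossy. Your version is more careful than the paper's. You write $\{P_i=1\}$ as $\{A_{x_i}\in S_i\}$, you note that the tuple-pair case relies on the two draws being independent, and you give an explicit counterexample in which the dependence sits inside the predicate bins, which is the paper's informal ``variation within a single bin'' mechanism made concrete; the abandoned XOR paragraph can simply be deleted.
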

\begin{proof}[Proof]
Since each $P_i$ is derived strictly from its corresponding attribute $A_{x_i}$, independence among the attributes guarantees independence among the predicates. Therefore, if the predicates' joint probability fails to factorize (i.e., it does not equal the product of their individual probabilities), the underlying attributes must be dependent. The converse does not hold, because predicates are lossy abstractions and, thus, if an attribute dependency is entirely contained within the values mapped to a single predicate outcome (e.g., variations inside a single numerical bin), the binary predicates will not detect it, leaving their probabilities perfectly independent.
\end{proof}

\begin{proposition}[Equivalence with total correlation]\label{prop:tc}
Let the set $P' = \{P_1,\dots,P_l\}$ be predicates over pairwise-distinct attributes, and let $TC(P') = \sum_i H(P_i) - H(P_1,\dots,P_l)$ denote the total correlation of the induced Bernoulli vector. Then Equation~\eqref{eq:condIndep} holds for every subset of $P'$ iff $TC(P') = 0$.
\end{proposition}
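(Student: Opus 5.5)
The plan is to pass through an intermediate statement: that the Bernoulli vector $(P_1,\dots,P_l)$ is \emph{mutually independent}. I would prove two equivalences, namely (a) Equation~\eqref{eq:condIndep} holds for every subset of $P'$ iff the vector is mutually independent, and (b) $TC(P')=0$ iff the vector is mutually independent. Throughout I read Equation~\eqref{eq:condIndep} as the paper uses it, with each predicate taken at its satisfied outcome. I would also assume every conditioning event has positive probability; if some $p(\bigwedge_{j\in S} P_j)=0$, I would interpret Equation~\eqref{eq:condIndep} through its multiplied-out form, which is the form used below.

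\textbf{Step (b).} This is the standard information-theoretic part. Writing the entropies as expectations gives
\[
TC(P') = \sum_{x\in\{0,1\}^l} p(x)\log\frac{p(x)}{\prod_i p_i(x_i)} = D_{\mathrm{KL}}\Bigl(p \,\Big\|\, \textstyle\prod_i p_i\Bigr),
\]
where $p$ is the joint law of the vector and $p_i$ is the marginal of $P_i$. By Gibbs' inequality this quantity is nonnegative, and it vanishes iff $p(x)=\prod_i p_i(x_i)$ for all $2^l$ cells $x$. That last condition is exactly mutual independence.

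\textbf{Step (a), forward direction.} I would use a telescoping argument. For any subset $S=\{P_{i_1},\dots,P_{i_k}\}\subseteq P'$, apply Equation~\eqref{eq:condIndep} to the subset $S$ with the chain $Q=\{P_{i_1},\dots,P_{i_j}\}$ for $j=k,k-1,\dots$. This replaces $p(P_{i_j}\mid P_{i_1},\dots,P_{i_{j-1}})$ by $p(P_{i_j})$ at each stage. The chain rule then yields
\[
p\Bigl(\textstyle\bigwedge_{P_i\in S} P_i\Bigr)=\prod_{P_i\in S} p(P_i)\qquad\text{for every } S\subseteq P'.
\]
So all the ``all-ones'' cell probabilities of every sub-vector factorize.

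\textbf{Main obstacle.} The hard part is upgrading this to factorization of \emph{every} cell, including those where some predicates equal $0$. I would do this by inclusion--exclusion. For a cell with ones on $S$ and zeros on $T$,
\[
p\Bigl(\textstyle\bigwedge_{S} P_i \wedge \bigwedge_{T}\neg P_j\Bigr)=\sum_{U\subseteq T}(-1)^{|U|}\,p\Bigl(\textstyle\bigwedge_{S\cup U} P_i\Bigr).
\]
Substituting the products from the previous step, the right-hand side becomes $\prod_{S}p(P_i)\sum_{U\subseteq T}(-1)^{|U|}\prod_{U}p(P_j)$. This factors as $\prod_S p(P_i)\prod_T\bigl(1-p(P_j)\bigr)$, which is the required product of marginals. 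Hence the vector is mutually independent.

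\textbf{Step (a), converse direction.} This is immediate. Mutual independence gives $p(\bigwedge_{Q\setminus\{P_i\}} P_j)=\prod_{Q\setminus\{P_i\}}p(P_j)$ and similarly with $P_i$ included, for every $Q\subseteq P'$. Dividing shows that $p(P_i\mid Q\setminus\{P_i\})=p(P_i)$ for every $Q$. In particular this conditional probability is the same for $Q=P'$ and for any smaller $Q$, which is Equation~\eqref{eq:condIndep} for every subset.

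Chaining (a) and (b) proves the proposition. This is consistent with Proposition~\ref{prop:soundness}: here everything concerns the predicate vector, not the underlying attributes.
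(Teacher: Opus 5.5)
Your proof is correct and follows the same outline as the paper's: independence forces every conditional to equal its marginal, and conversely stripping conditions from the chain rule forces factorization. It is tighter at the two points the paper handles informally.

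First, the paper simply asserts that $TC(P')=0$ means the predicates are ``perfectly independent''. Your identity $TC(P')=D_{\mathrm{KL}}(p\,\|\,\prod_i p_i)$, with Gibbs' inequality, makes precise that this is factorization on all $2^l$ cells.

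Second, and more substantively, the paper collapses the chain rule only for the all-satisfied event $p(P_1\wedge\cdots\wedge P_l)$ along one ordering. It then declares that the whole joint distribution factorizes. That step is incomplete as written, for two reasons:
\begin{itemize}
\item Equation~\eqref{eq:condIndep} speaks only about satisfied outcomes, so it says nothing directly about cells containing negated predicates.
\item One ordering of the chain only produces the prefix products. For $l=3$ one can have $p(P_1\wedge P_2)=p(P_1)p(P_2)$ and $p(P_1\wedge P_2\wedge P_3)=p(P_1)p(P_2)p(P_3)$ while $P_1$ and $P_3$ are dependent.
\end{itemize}
You use the every-subset hypothesis to obtain $p(\bigwedge_{S}P_i)=\prod_{S}p(P_i)$ for all $S\subseteq P'$, then inclusion--exclusion to reach the remaining cells. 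That is exactly the missing step. Your multiplied-out reading also covers null conditioning events, which the paper ignores.

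One small fix in your telescoping. At stage $j$, apply the hypothesis to the subset $\{P_{i_1},\dots,P_{i_j}\}$ itself with $Q=\{P_{i_j}\}$, rather than to $S$ with $Q=\{P_{i_1},\dots,P_{i_j}\}$. This gives $p(P_{i_1}\wedge\cdots\wedge P_{i_j})=p(P_{i_j})\,p(P_{i_1}\wedge\cdots\wedge P_{i_{j-1}})$ directly. It also stays valid in the multiplied-out form when some of these events have probability zero, whereas routing through $S$ can become vacuous there.
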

\begin{proof}[Proof]
Total correlation ($TC$) measures the amount of shared information among a set of variables. If $TC(P') = 0$, the predicates are perfectly independent, so knowing the probability of some predicates provides no information about the others. That is, every conditional probability equals its marginal probability, which directly satisfies Equation~\eqref{eq:condIndep}. Conversely, suppose Equation~\eqref{eq:condIndep} holds, meaning that conditioning a predicate on other sets never changes its probability. Expanding the joint probability step-by-step via the chain rule, $p(P_1,\dots,P_l) = p(P_1) \cdot p(P_2 \mid P_1) \cdots p(P_l \mid P_1,\dots,P_{l-1})$, we can strip the conditions from every term (e.g., $p(P_2 \mid P_1)$ becomes $p(P_2)$), since Equation~\eqref{eq:condIndep} guarantees context does not matter. The chain thus collapses into a simple product of individual probabilities, so the joint distribution factorizes perfectly and the variables are entirely independent, so $TC(P') = 0$.
\end{proof}

This statistical foundation allows dependency decisions to be made by evaluating predicates on domain samples, and ensures uncertainty is properly accounted for in the modeled distributions. All conclusions about attribute dependencies are grounded in a principled statistical test, quantifying how unlikely the observed predicate behavior would be under true attribute independence. Crucially, this reasoning operates entirely within the predicate space, remaining agnostic to the heterogeneous data types of the source attributes. Remaining at this level of abstraction does not pose a problem, as, by Proposition~\ref{prop:soundness}, any dependency detected in the predicate space guarantees a dependency among the source attributes, regardless of their types.

\subsection{Problem statement}\label{sec:problem_statement}

To formalize the data composition problem, we first define a pairwise relationship between attributes as follows.

\begin{definition}[Related attributes]\label{def:related}
Attributes $A_i$ and $A_j$ are \emph{related}, denoted $A_i \sim A_j$, iff there exists a predicate set $P' \subseteq P$, over pairwise-distinct attributes, that violates conditional independence (Equation~\eqref{eq:condIndep}), with $P(A_i) \cap P' \neq \emptyset$ and $P(A_j) \cap P' \neq \emptyset$.
\end{definition}

In other words, a predicate set showcasing conditional dependence establishes a relationship between every pair of attributes represented within it. In a lake, predicates are evaluated over the instance obtained by joining the input relations along the candidate keys. Data composition is then stated as follows.

\begin{problem}[Data Composition]\label{prob:composition}
Given relation instances $r(R_1),\dots,$ $r(R_k)$, a set of join candidates $J$, and a predicate space $P$ over their attributes, compute the composition $\mathcal{C}=\{C_1,\dots,C_q\}$ of the non-key attributes induced by the transitive closure of $\sim$, materializing each block as a relation and replicating every join key of $J$ needed to associate those attributes.
\end{problem}

That is, the output is a graph where nodes indicate attributes and edges identify meaningful dependencies between attributes, thus generating partitions (or components) of semantically related attributes (we further extend this in Section~\ref{sec:pipeline}). Unrelated attributes form their own relation rather than being discarded, making the composition complete (R3), and the replicated keys keep the resulting relations mutually joinable. Section~\ref{sec:scalable_algorithm} develops the algorithm that decides $\sim$ statistically.

\section{Scalable algorithm} \label{sec:scalable_algorithm}

Thanks to the predicate abstraction, the problem of discovering multivariate attribute associations can be reduced to well-studied problems such as frequent itemset mining and association rule mining. Algorithms such as Apriori~\cite{DBLP:conf/sigmod/AgrawalIS93} or FP-Growth~\cite{DBLP:conf/sigmod/HanPY00} may be straightforwardly adapted to discover predicate sets that deviate from independence, and by extension to identify the existence of multivariate attribute dependencies. However, while this flexibility is one of the principal advantages of the predicate abstraction, a dedicated algorithm designed specifically for this problem can achieve significantly better performance.

Let $P = \{P_1, P_2, \dots, P_k\}$ be the predicate space for a predicate domain $\Omega$, where $P_i \colon \Omega \to \{0,1\}  \ \ \forall P_i \in P$. The space of predicate sets can be represented as a partially ordered set under the subset relation $\subset$, as in Figure~\ref{fig:lattice}. In this representation, each node corresponds to a predicate set $\{P_i, P_j, \dots\}$ and is associated with its joint predicate probability $p(P_i, P_j, \dots)$. Each edge corresponds to the addition of a predicate, $\{P_j, \dots\} \rightarrow \{P_i, P_j, \dots\}$, and is associated with the conditional probability $p(P_i \mid P_j, \dots)$. Under this representation, identifying conditional probabilities that violate Equation~\ref{eq:condIndep} can be translated to finding nodes $N\subset P$ where the conditional probability of the edge adding $P_i$, $p(P_i|N)$, is different from the one  from the edge adding $P_i$ to any subset $N'\subset N$, $p(P_i|N')$.

To discover all such predicate sets, one could adopt search strategies analogous to those of classical association rule mining: design an efficient procedure to evaluate each $P_i$ over all elements of $\Omega$, empirically estimate each conditional probability $p(P_i \mid P_j, \dots)$, and accept those predicate sets whose conditional probabilities differ sufficiently from the independence baseline. However, such approaches face two principal scalability challenges:

\begin{itemize}[leftmargin=1em, itemsep=0pt]
    \item \textbf{Row scalability}: Regardless of the algorithm design, every element of the domain must be visited at least once. This is a significant constraint for large datasets, particularly given the increased domain size of more complex predicate domains.
    \item \textbf{Column scalability}: The predicate space grows linearly with the number of attributes. Thus, the number of candidate predicate sets is exponential in the number of attributes, posing severe limitations for dependency discovery in large attribute sets.
\end{itemize}

Both challenges have well-established solutions in the association rule mining literature that transfer naturally to this setting. Row scalability can be improved through sampling and estimating conditional probabilities from a subset $\Omega' \subseteq \Omega$~\cite{AR_sample_riondato2014efficient,DBLP:conf/vldb/Toivonen96}. Column scalability can be addressed by imposing minimum support or maximum predicate-set size thresholds to prune the search space~\cite{DBLP:conf/sigmod/AgrawalIS93,DBLP:conf/sigmod/HanPY00}. However, these approaches introduce arbitrary hyperparameters (e.g., sampling delimiters or support thresholds), which are undesirable due to the lack of a generalizable and mathematically validated criterion to instantiate them. In this section, we present the core algorithm of \sys, which adapts these methods to discover independence-breaking predicate sets with minimal computation and without arbitrary hyperparameters (Section~\ref{sec:posteriors}).

\subsection{Posterior modeling}\label{sec:posteriors}

Instead of relying on point estimates of probabilities computed from a single, arbitrarily sized subset $\Omega' \subseteq \Omega$, we model the full distribution of each probability. These distributions are inferred from dynamically sized samples $\Omega(P_i,P_j,\dots)\subseteq\Omega$, chosen independently for each predicate set. Treating the evaluation of a predicate on a randomly sampled element of $\Omega$ as a Bernoulli random variable, the corresponding population probability is modeled using Bayesian Learning with a Beta posterior,
\[
p(P_i,P_j,\dots)\sim\mathrm{Beta}(\alpha+1,\beta+1),
\]
where
\[
\begin{gathered}
\alpha =
\left|
\left\{
t\in\Omega(P_i,P_j,\dots)
\,\middle|\,
P_i(t)\land P_j(t)\land\cdots
\right\}
\right|, \\
\beta =
|\Omega(P_i,P_j,\dots)|-\alpha.
\end{gathered}
\]
Throughout, we write $\mathrm{attr}(P_i)$ for the attribute over which $P_i$ is defined, and $\mathrm{attrs}(P') = \{\mathrm{attr}(P_i) : P_i \in P'\}$.  Violations of Equation~\eqref{eq:condIndep} are determined statistically. Given two Beta posteriors with means $\mu_1,\mu_2$ and variances $\sigma_1^2,\sigma_2^2$ for the two conditional probabilities compared by Equation~\eqref{eq:condIndep}, we define the \textit{soundness} of the comparison as $s \;=\; \frac{|\mu_1 - \mu_2|}{\sqrt{\sigma_1^2 + \sigma_2^2}}$, the number of pooled standard deviations separating the estimates\footnote{For numerical stability, instead of directly modeling the beta distributions, we model the logarithm of their odds ratio.} \cite{martin_how_2025,martin_discovering_2025}. The precision of every estimate is driven by a user-defined threshold $\epsilon$, the \emph{approximation factor}. Each posterior is refined (Section~\ref{sec:refinement}) until its variance falls below $\epsilon$. A comparison is accepted as a violation when $s \geq s_0$, set to $s_0=\Phi^{-1}(1-0.05)$ throughout\footnote{$\Phi^{-1}$ denotes the inverse cumulative normal distribution. A violation indicates the two distributions would fail a statistical equality test of $95\%$ confidence.}. Violating comparisons are the backbone of the algorithm, serving as a guide for exploration (Section~\ref{sec:expansion}) and as a baseline to determine resulting attribute dependencies (Section~\ref{sec:pipeline}).


Rather than traversing the lattice using breadth-first search (as in Apriori) or depth-first search (as in FP-Growth) over the full domain, the proposed algorithm uses these violations to maintain an incrementally growing set of explored predicate sets, denoted by $N$.  For every explored predicate set $P'\in N$, the algorithm stores the parameters of its probability distribution, $\alpha(P')$ and $\beta(P')$, inferred from evaluating all elements in its current sample $\Omega(P')$. Initially, the explored set contains only the root, all singletons, and all
cross-attribute pairs,
\[
\begin{split}
N=\{\emptyset\}
&\cup \bigl\{\{P_i\} : P_i \in P\bigr\}\\
&\cup \bigl\{\{P_i,P_j\} : \mathrm{attr}(P_i) \neq
\mathrm{attr}(P_j)\bigr\};
\end{split}
\]

The explored set is then iteratively updated through two stages:

\begin{itemize}[leftmargin=1em, itemsep=0pt]
    \item \textbf{Distribution refinement} (Section~\ref{sec:refinement}). For every explored predicate set $P'\in N$, its sample is expanded, $\Omega(P')\gets\Omega(P')\cup\Delta\Omega(P'),$
    and the corresponding parameters $\alpha(P')$ and $\beta(P')$ are updated, reducing the uncertainty of the estimated probability.
    \item \textbf{Lattice expansion} (Section~\ref{sec:expansion}). The boundary of the explored lattice is enlarged whenever the refined distributions provide sufficient statistical evidence that unexplored neighboring predicate sets may violate Equation~\ref{eq:condIndep}.
\end{itemize}

\begin{figure}[]
\centering
\Description{Lattice that exemplifies the search developed by \sys, encompassing the exploration of paths whose joint dependence deviates from statistical independence.}
\begin{tikzpicture}[
  every node/.style={font=\tiny},
  n/.style={draw=black!70, rounded corners=1pt, fill=black!5,
    inner sep=1.6pt, align=center},
  rnode/.style={n, draw=red!75!black, thick, fill=red!6},
  u/.style={draw=black!35, dashed, rounded corners=1pt,
    inner sep=1.6pt, align=center, text=black!45},
  fe/.style={draw=black!30, thin},
  se/.style={draw=blue!65!black, thick},
  he/.style={draw=red!75!black, very thick},
  pe/.style={draw=black!20, densely dashed, thin},
  lab/.style={fill=white, inner sep=0.4pt, font=\tiny}
]
\node[n] (root) at (0,0) {$\emptyset$};
\node[n] (F) at (-3.15,1.08) {$\{P_F\}$\\$.200$};
\node[n] (R) at (-1.05,1.08) {$\{P_R\}$\\$.250$};
\node[n] (V) at (1.05,1.08) {$\{P_V\}$\\$.202$};
\node[n] (L) at (3.15,1.08) {$\{P_L\}$\\$.250$};
\node[n]     (FR) at (-3.55,2.36) {$\{P_F,P_R\}$\\$.050$};
\node[rnode] (FV) at (-2.13,2.36) {$\{P_F,P_V\}$\\$.045$};
\node[rnode] (RV) at (-0.71,2.36) {$\{P_R,P_V\}$\\$.058$};
\node[n]     (FL) at (0.71,2.36)  {$\{P_F,P_L\}$\\$.050$};
\node[n]     (RL) at (2.13,2.36)  {$\{P_R,P_L\}$\\$.063$};
\node[n]     (VL) at (3.55,2.36)  {$\{P_V,P_L\}$\\$.051$};
\node[rnode] (FRV) at (-2.4,3.64) {$\{P_F,P_R,P_V\}$\\$.030$};
\node[u] (FRL) at (-0.15,3.64) {$\{P_F,P_R,P_L\}$};
\node[u] (FVL) at (1.75,3.64) {$\{P_F,P_V,P_L\}$};
\node[u] (RVL) at (3.55,3.64) {$\{P_R,P_V,P_L\}$};
\node[u] (top) at (0.6,4.64) {$\{P_F,P_R,P_V,P_L\}$};
\draw[fe] (root)--(F); \draw[fe] (root)--(R);
\draw[se] (root)--(V) node[lab,pos=0.5]{$p(P_V) = .202$}; \draw[fe] (root)--(L);
\draw[fe] (F)--(FR); 
\draw[se] (F)--(FV) node[lab,pos=0.5]{$p(P_V|P_F) = .225$};
\draw[se] (R)--(RV) node[lab,pos=0.5]{$p(P_V|P_R) = .232$};
\draw[fe] (R)--(FR);
\draw[fe] (V)--(FV); \draw[fe] (V)--(RV);
\draw[fe] (F)--(FL); \draw[fe] (R)--(RL); \draw[fe] (V)--(VL);
\draw[fe] (L)--(FL); \draw[fe] (L)--(RL); \draw[fe] (L)--(VL);
\draw[he] (FR)--(FRV) node[lab,pos=0.5]{$p(P_V|P_F \wedge P_R)=.600$};
\draw[fe] (FV)--(FRV); \draw[fe] (RV)--(FRV);
\draw[pe] (FR)--(FRL); \draw[pe] (FL)--(FRL); \draw[pe] (RL)--(FRL);
\draw[pe] (FV)--(FVL); \draw[pe] (FL)--(FVL); \draw[pe] (VL)--(FVL);
\draw[pe] (RV)--(RVL); \draw[pe] (RL)--(RVL); \draw[pe] (VL)--(RVL);
\draw[pe] (FRV)--(top); \draw[pe] (FRL)--(top);
\draw[pe] (FVL)--(top); \draw[pe] (RVL)--(top);
\node[align=left, text=black!60] at (-3.2,0.2)
  {$P_F \to \textit{floor\_number} = 0$ \\
  $P_R \to \textit{flood\_risk} > 0.3$\\
  };
\node[align=left, text=black!60] at (2.8,0.2)
  {$P_V \to \textit{market\_value} < 250.000$ \\
  $P_L \to \textit{locker\_number} < 50$\\
  };
\node[align=left, text=black!45] at (3.1,4.40)
  {\textit{locker\_number} is unrelated,\\ so its paths are pruned};
\end{tikzpicture}
\caption{Snippet of the lattice for the example of Figure~\ref{fig:discovery_example}. Nodes show the posterior mean of the joint probability at termination and edge labels give the conditional probability of the added predicate. The weak conditional signals $p(P_V\!\mid\!P_F)=.225$ and $p(P_V\!\mid\!P_R)=.232$
deviate from $p(P_V)=.202$ just beyond the effect-size floor at $\epsilon=10^{-5}$ ($\delta_{\min}\approx.007$), relating \textit{floor\_number} and \textit{flood\_risk} to \textit{market\_value}. Expansion then reaches $\{P_F,P_R,P_V\}$, where the conditional $p(P_V \mid P_F \wedge P_R) = .600$ confirms the strong joint dependence ($s\gg s_0$). \textit{locker\_number} relates to nothing, so its supersets are pruned. 
}
\label{fig:lattice}
\end{figure}

\subsection{Distribution Refinement}\label{sec:refinement}

Given the current explored set $N$, where each predicate set $P' \in N$ has posterior parameters $\alpha(P')$ and $\beta(P')$ inferred from sample $\Omega(P')$, the goal is to enlarge each sample, $\Omega(P') \gets \Omega(P') \cup \Delta\Omega(P')$, and update the corresponding posterior parameters without explicitly storing $\Omega(P')$, which is prohibitively memory-intensive.

The algorithm maintains the invariant that all variances remain approximately equal. Equivalently, distributions are refined at a rate proportional to their current uncertainty, preventing computation from being wasted on already well-saturated probabilities while focusing effort on poorly estimated ones. At each iteration, the algorithm identifies all explored predicate sets whose posterior variance still exceeds the approximation factor $\epsilon$. If no such predicate sets remain, refinement is complete for the current explored set. The algorithm terminates once lattice expansion (Section~\ref{sec:expansion}) also produces no new candidates (Section~\ref{sec:termination}).

A new domain sample $\Delta\Omega\subseteq\Omega$ is then generated.\footnote{The sample size starts at $1000$ elements and increases exponentially up to a maximum of $10^6$ in our implementation. This allows the algorithm to quickly identify distributions that converge early while avoiding unnecessary computation.} This sample is propagated through the explored lattice by filtering it along each edge, retaining only those elements satisfying the predicates encountered along the path\footnote{There are $|P'|!$ paths from the root to node $P'$. The algorithm uses the inferred distributions to estimate the optimal path parent of each node.}. Consequently, when visiting a predicate set $P'$, the algorithm updates $\alpha(P')$ and $\beta(P')$ using the number of sampled elements that satisfy, or fail to satisfy, all predicates in $P'$.

\subsection{Lattice Expansion}\label{sec:expansion}

Given the current explored set $N$ together with the posterior parameters of every explored predicate set, the goal of this stage is to expand $N\gets N\cup\Delta N.$
The primary challenge is to prune the exponentially large search space without discarding predicate sets that may reveal meaningful dependencies.

At any iteration, the posterior distributions provide sufficient information to determine whether there is statistical evidence that a predicate set violates Equation~\ref{eq:condIndep}. Let $S\subseteq N$ denote the collection of explored predicate sets containing a comparison whose soundness reaches $s_0$. Recall from Definition~\ref{def:related} that such a violating set relates every pair of attributes represented within it. The expansion set $\Delta N$ is constructed by taking each explored predicate set $P'\in N$ and adding one predicate from $P(A_i)$, for $A_i \notin \mathrm{attrs}(P')$, whenever attribute $A_i$ is related to at least one attribute already represented in $P'$. Intuitively, supersets obtained by introducing predicates from attributes that have never exhibited any statistical relationship with the attributes already present in $P'$ are not explored. Since combinations of independent attributes constitute the overwhelming majority of the lattice, this pruning strategy dramatically reduces the search space while ensuring that any discovered dependency can trigger exploration of the corresponding region of the lattice.




\begin{assumption}[Subset Traceability]\label{as:seeded}
Every multivariate dependency of interest exhibits, on at least one of its proper subsets, a conditional shift of at least $\delta_{\min} = s_0\sqrt{2\epsilon}$.
\end{assumption}

This assumes that multivariate dependencies leave a statistical trail in their smaller subsets to guide the search. In the running example (Figure~\ref{fig:lattice}), flood-prone districts are slightly cheaper on average, and ground floors slightly discounted overall. Such weak marginal signals, once accepted, trigger exploration of the joint region where the dependence is strong. A traditional feature selector might not consider the weak pairwise conditional dependencies worth keeping, thus missing the stronger multivariate interaction. Consequently, \sys~implements a scalable heuristic rather than an exhaustive search: dependencies whose subsets carry no detectable signal (e.g., purely interactive ones) may be missed. Its pairwise guarantee is unaffected, as all cross-attribute pairs are explored by construction.

\subsection{Termination and cost}\label{sec:termination}

Upon termination, every explored predicate set is associated with a posterior probability distribution whose uncertainty is bounded by the user-specified approximation factor~$\epsilon$. These posterior distributions are then used to identify predicate sets that violate the conditional independence criterion of Equation~\ref{eq:condIndep}, as described previously. Consequently, the computational cost is governed primarily by the uncertainty bound $\epsilon$ and the number of dependencies present in the data, rather than by exhaustive traversal of the entire lattice, making the approach scalable to large, high-dimensional datasets.

\section{The \sys~system} \label{sec:pipeline}

In this section, we present \sys~(\textit{Discovering Intrinsic Attribute Dependencies At scale}) and its implementation details. For each relation in the input data lake, \sys~constructs a predicate space over its schema, applies the mining algorithm of Section~\ref{sec:scalable_algorithm} to obtain the set of statistically meaningful attribute dependencies ($R$), and transforms $R$ into the relations of the composed layer.

\medskip

\stitle{Input} \sys~takes as input a wide relation, typically obtained by joining a base relation with candidate datasets returned by a discovery system~\cite{fernandez2018seeping, nargesian2019data, DBLP:journals/tkde/MaynouNPFRQ26}. As argued in Section~\ref{sec:related_work}, discovery establishes that datasets \emph{can} be combined, but not whether their attributes \textit{should belong} in the same relation, a gap that \sys~addresses. Therefore, \sys~is agnostic to the discovery algorithm and join conditions used. Nonetheless, recall that predicate evaluation is modeled as a Bernoulli trial over a uniformly sampled domain $\Omega$, with the sampling distribution matching the one under evaluation. Thus, we fixe $\Omega = r(R_b)$ and require every candidate to be folded in without altering that domain. This requirement is violated by one-to-many joins, as a base tuple matching $m$ candidate tuples appears $m$ times in the result, contributing $m$ draws instead of one and reweighting the empirical distribution by match multiplicity. The resulting dependencies are indistinguishable from genuine data dependencies within the predicate space, invalidating Proposition~\ref{prop:soundness}. Each candidate should therefore be pre-aggregated by grouping on its join attribute before being combined via a multi-way left join, keeping $|\Omega| = |r(R_b)|$ regardless of candidate cardinality.

\medskip

\stitle{Instantiating the predicate space} As described in Section~\ref{sec:multivariate_dependencies}, there are numerous ways of constructing predicate spaces. The most natural approach defines predicates over the domain of individual tuples, where each predicate is satisfied whenever a given attribute assumes a particular value or falls within a specified range (first row of Table~\ref{tab:predSpaces}). While this representation yields a highly expressive predicate space, its size grows proportionally with the number of distinct values present in the data, making it impractical for large and heterogeneous datasets. Instead, we construct predicates over the domain of tuple pairs (second row of Table~\ref{tab:predSpaces}). For each attribute $A_i$, the generated predicate set is fixed:
\[
P(A_i)=
\begin{cases}
\{=A_i,\ne A_i,<A_i,>A_i\}, & \text{if } A_i \text{ has an ordered domain},\\
\{=A_i,\ne A_i\}, & \text{otherwise}.
\end{cases}
\]
Although the resulting predicate spaces are less expressive than those obtained by enumerating all distinct attribute values, they provide a practical representation for identifying relationships between attributes. By capturing only the fundamental comparison operators, they provide a compact yet informative representation of attribute behavior, substantially reducing the search space while preserving the information required for dependency discovery.

\medskip

\stitle{Attribute Dependencies} The system employs the algorithm described in Section~\ref{sec:scalable_algorithm} to infer joint predicate probability distributions whose posterior variance is bounded by the user-specified approximation factor~$\epsilon$. These distributions are subsequently used to identify the collection of predicate sets $P' \subset P$ that violate the conditional independence criterion introduced in Section~\ref{sec:multivariate_dependencies}.

Rather than directly reporting the attribute sets corresponding to statistically significant predicate combinations, the system derives a pairwise dependency score for every pair of attributes. For any two attributes $A_x$ and $A_y$, this score is defined as the maximum soundness observed among all pairs of conditional probabilities $p(P_i \mid P_j, P_k, \ldots)$ and $p(P_i \mid P_k, \ldots)$ violating Equation~\ref{eq:condIndep}, where $P_i \in P(A_x)$ and $P_j \in P(A_y)$. Intuitively, the score measures the greatest change in the behavior of a predicate derived from $A_x$ resulting from conditioning on a predicate derived from $A_y$. Consequently, it quantifies the strongest statistical evidence that $A_y$ influences the behavior of $A_x$, and can therefore be used directly as the test statistic for the independence hypothesis test described in Section~\ref{sec:multivariate_dependencies}.

\begin{figure}[]
\centering
\Description{The association graph generated by \sys, with the resulting relations after the data has been composed. Each relation indicates their attributes via nodes and edges indicate the attributes that showcase statistical independence.}
\begin{tikzpicture}[
  every node/.style={font=\tiny},
  n/.style={draw=black!70, rounded corners=1pt, fill=black!5,
    inner sep=1.6pt, align=center, text width=1.32cm},
  rnode/.style={n, draw=red!75!black, thick, fill=red!6},
  inert/.style={n, draw=black!35, fill=black!3, text=black!55},
  k/.style={n, draw=black!55, fill=yellow!22},
  fe/.style={draw=black!30, thin},
  se/.style={draw=blue!65!black, thick},
  he/.style={draw=red!75!black, very thick},
  cbox/.style={draw=black!45, densely dashed, rounded corners=3pt},
  sep/.style={draw=black!30, dotted, thick},
  clab/.style={text=black!60, inner sep=1pt},
  ann/.style={text=black!55, align=left, inner sep=1pt},
  lab/.style={fill=none, text opacity=1, inner sep=0.5pt}
]
\draw[cbox] (-4.09,2.16) rectangle (0.69,4.60);
\draw[cbox] (1.06,3.00) rectangle (4.24,4.60);
\draw[cbox] (-4.09,1.30) rectangle (-0.92,1.90);
\draw[cbox] (-0.19,1.30) rectangle (1.39,1.90);
\draw[sep] (0.875,2.04) -- (0.875,4.65);
\node[n]     (NR) at (-3.30,4.24) {\textit{num\_rooms}};
\node[n]     (AS) at (-1.70,4.24) {\textit{area\_sqm}};
\node[n]     (SQ) at (-0.10,4.24) {\textit{school\_}\\\textit{quality}};
\node[rnode] (FN) at (-3.30,3.36) {\textit{floor\_}\\\textit{number}};
\node[rnode] (MV) at (-1.70,3.36) {\textit{market\_}\\\textit{value}};
\node[n]     (TR) at (-0.10,3.36) {\textit{tax\_rate}};
\node[rnode] (FR) at (-3.30,2.48) {\textit{flood\_risk}};
\draw[fe] (MV)--(AS); \draw[fe] (MV)--(NR); \draw[fe] (MV)--(SQ);
\draw[fe] (MV)--(TR); \draw[fe] (AS)--(NR); \draw[fe] (TR)--(SQ);
\draw[se] (MV)--(FN) node[lab,pos=0.5,yshift=10pt]{$s{=}5.1$};
\draw[se] (MV)--(FR) node[lab,pos=0.55,xshift=10pt]{$s{=}6.7$};
\draw[he] (FN)--(FR) node[lab,pos=0.5,xshift=-8pt]{$s{\approx}82$};
\node[n] (MT) at (1.85,4.24) {\textit{monthly\_temp}};
\node[n] (MP) at (3.45,4.24) {\textit{monthly\_}\\\textit{precipitations}};
\node[n] (MH) at (1.85,3.36) {\textit{monthly\_}\\\textit{humidity}};
\node[n] (MW) at (3.45,3.36) {\textit{monthly\_}\\\textit{wind\_speed}};
\draw[fe] (MT)--(MP); \draw[fe] (MT)--(MH); \draw[fe] (MP)--(MW);
\draw[fe] (MH)--(MW); \draw[fe] (MT)--(MW);
\node[inert] (WC) at (-3.30,1.60) {\textit{wall\_color}};
\node[inert] (WH) at (-1.70,1.60) {\textit{wall\_color\_hex}};
\draw[fe] (WC)--(WH);
\node[inert] (LN) at (0.60,1.60) {\textit{locker\_number}};
\node[k] (PI) at (-3.30,0.80) {\textit{property\_id}};
\node[k] (PC) at (-1.70,0.80) {\textit{postal\_code}};
\node[k] (SI) at (-0.10,0.80) {\textit{station\_id}};
\node[clab] at (-3.90,4.75) {$C_1$};
\node[clab] at (4,4.75)  {$C_2$};
\node[clab] at (-3.90,2.02) {$C_3$};
\node[clab] at (-0.02,2.02) {$C_4$};
\node[ann, text width=2.6cm] at (2.8,1.64)
  {each $C_i$ is materialized\\as one relation};
\node[ann, text width=2.8cm] at (2.65,0.80)
  {keys: excluded from $G$,\\replicated into every $C_i$};
\node[ann, align=center, text width=8.2cm, text=black!60] at (2.3,2.6)
  {no edge crosses the spurious join \\ \textit{property\_id} $\approx$ \textit{station\_id}};
\end{tikzpicture}
\caption{The association graph $G$ for the running example of Figure~\ref{fig:discovery_example}.
Edges join attributes co-occurring in some violating predicate set. We indicate the soundness of relationships breaking the independence hypothesis in Figure \ref{fig:lattice}, following the posteriors at $\epsilon=10^{-5}$.
The red edge carries the comparison $p(P_V \mid P_F, P_R)$ vs.\
$p(P_V \mid P_R)$, $|.600-.232|/\sqrt{2\epsilon} \approx 82$. The four
connected components are the composition $\mathcal{C}$ and become the
four relations of Figure~\ref{fig:discovery_example}c. 
}
\label{fig:assoc}
\end{figure}

\medskip

\stitle{Compositional Components} These pairwise scores are turned into the association graph $G$, an undirected graph in which two columns are connected if the maximum soundness between any conditional predicate meets or exceeds a fixed threshold: $\Phi^{-1}\!\left(1-\frac{0.01}{|P'|}\right)$, where $\Phi^{-1}$ denotes the inverse cumulative distribution function of the standard normal distribution and $P'$ is the set of predicates attributed to the score. This is the $(1-0.01/|P'|)$ quantile of the standard normal distribution (i.e., roughly two standard deviations, applying a Bonferroni correction across the predicates entering each score). Composing then reduces to traversing the constructed graph and defining each connected component as a relation. For example, given attributes $\{A,B,C\} \in R$, if $A$ relates to $B$ and $B$ relates to $C$, then $A$, $B$, and $C$ are placed in the same component because, even if $A$ and $C$ are not directly related, they have a semantic link through $B$. Columns with no qualifying relationship are treated as univariate noise and collected into a single component rather than discarded outright. Columns that a join was actually performed on are propagated into every resulting relation, so each table remains independently joinable back via a shared key. Figure~\ref{fig:assoc} shows the compositional components generated for the example in Figure~\ref{fig:discovery_example}.


\section{Experiments} \label{sec:experiments}

Our evaluation answers four questions, tied to the requirements defined in Section~\ref{sec:related_work}:
\begin{itemize}[leftmargin=2.6em, itemsep=1pt]
    \item[\textbf{RQ1}] Does the mining algorithm scale in rows and columns beyond the classical lattice miners? (R4)
    \item[\textbf{RQ2}] Does composition separate noise from signal better than unsupervised feature selection? (R1--R3)
    \item[\textbf{RQ3}] Does composing the data before modeling improve downstream ML pipelines? (R2--R3)
    \item[\textbf{RQ4}] Does a single, untargeted composition support inference over several targets defined only afterwards? (R1, R3)
\end{itemize}

\stitle{Benchmarks} We collect 17 public datasets spanning binary, multiclass, and regression tasks (Table~\ref{tab:benchmarks}), and derive from each two noisy benchmarks by adding columns, emulating the result of spurious joins. The \textit{univariate noise} (UN) benchmark adds, for every original non-target column, two columns of pure noise (one numerical, one categorical, with varying distributions) and two shuffled copies of the column itself, preserving its distribution while destroying every relationship. The \textit{multivariate noise} (MN) benchmark adds, on top of UN, a cluster of synthetic features correlated with one another but with nothing else, emulating an entire relation spuriously merged. This results in benchmarks ranging from 41 to 406 columns.

\begin{table}
    \caption{Characteristics of the considered datasets.}
    \centering
    \small
    \resizebox{\columnwidth}{!} {
        \begin{tabular}{cccc}
            \toprule
            \textbf{Dataset} & \textbf{Task} & \textbf{\#Rows} & \textbf{\#Columns}\\
            \midrule
            Bank & Binary & 45k & 16\\
            Default Payment & Binary & 30k & 24\\
            Jannis & Binary & 57k & 54\\
            Miniboone & Binary & 73k & 50\\
            Covertype & Multiclassification (7) & 40k & 13\\
            Drive Diagnosis & Multiclassification (10) & 58k & 48\\
            Dry Beans & Multiclassification (7) & 13k & 17\\
            Mice Protein & Multiclassification (8) & 1k & 81\\
            Pendigits & Multiclassification (10) & 11k & 17\\
            Students & Multiclassification (3) & 5k & 36\\
            Appliances & Regression & 20k & 28\\
            Avocado Sales & Regression & 18k & 13\\
            Diamonds & Regression & 54k & 9\\
            House Sales & Regression & 21k & 17\\
            Nasa & Regression & 45k & 22\\
            Pol & Regression & 15k & 48\\
            Superconductivity & Regression & 21k & 82\\
            \bottomrule
        \end{tabular}
    }
    \\
    \vspace{0.1cm}
    \label{tab:benchmarks}
\end{table}


These benchmarks mimic Figure \ref{fig:discovery_example}b: discovery produces a large dataset mixing related attributes (base datasets) with uninformative ones (all the generated columns), and the task is to recover the meaningful relations. UN requires removing the noise while keeping the original dataset, whereas MN additionally requires isolating the intra-correlated cluster. Injecting synthetic noise gives an unambiguous ground truth stating which columns must not be kept, letting us measure how much composition benefits downstream analysis. Nonetheless, we acknowledge the need to validate on naturally occurring heterogeneous lakes, which is left for future work. In particular, uncontrolled compositions in such repositories might combine all datasets into a single component. Even if supported by our statistical criterion, such a result hinders downstream usage. ~$\epsilon$ is set to its default value of $10^{-5}$ for Sections \ref{sec:lima_vs_ufs}--\ref{sec:synth_targets}.



\subsection{Scalability of the mining algorithm}\label{sec:lima_scalability}

The predicate abstraction reduces multivariate dependency discovery to the problem of frequent-itemset and association-rule mining, for which Apriori~\cite{DBLP:conf/sigmod/AgrawalIS93} and FP-Growth~\cite{DBLP:conf/sigmod/HanPY00} represent the canonical mining algorithms. We compare \sys~to them in order to isolate the effect of our pruning rather than as multivariate discovery competitors, given that, as argued in Section \ref{sec:rw-computing}, no existing method mines multivariate dependencies at lake scale. Thus, the column scalability comparison below doubles as a pruning ablation: the exhaustive miners enumerate the lattice while \sys~expands it only under statistical evidence. We analyze the three algorithms on the two scalability axes introduced in Section~\ref{sec:scalable_algorithm} (rows and columns) and report results on three of the largest benchmarks, \textit{jannis}, \textit{miniboone} and \textit{drive\_diagnosis} (Figure~\ref{fig:scalability}). To ensure a fair comparison that isolates the lattice exploration strategies from the underlying evaluation mechanisms, we adapt Apriori and FP-Growth to operate on a sampled domain $\Omega' \subset \Omega$. The sample size $|\Omega'|$ is determined directly by $\epsilon$ and is chosen as the minimum value that guarantees the uncertainty of every explored lattice node remains bounded by $\epsilon$.\footnote{For a fixed sample size $n$, the probability distribution inferred from $a$ successes and $b$ failures, where $a+b=n$, attains its maximum variance at the boundary cases ($a=0$ or $b=0$). In these cases, the variance is proportional to $\frac{1}{n}$. Consequently, by selecting $n=\frac{k}{\epsilon}$, there exists a numerical constant $k$ such that every inferred distribution has variance bounded by $\epsilon$. For the statistical tests considered in this work, numerical evaluation yields $k=5$.} This represents a best-case scenario for applying these algorithms to derive predicate distributions with bounded uncertainty, providing a fair basis for comparison. Thus, Figure~\ref{fig:scalability} reports the execution time each algorithm requires to guarantee that the variance of every explored lattice node stays within the prescribed value of $\epsilon$.

\medskip

\stitle{Row Scalability} Row scalability refers to the capacity of the algorithms to deal with larger volumes of data, determined by $\epsilon$. Figure~\ref{fig:scalability} shows FP-Growth's advantage over Apriori growing with sample size: larger samples add redundancy that FP-Growth's compressed data structures exploit effectively. For highly dimensional datasets, however, these structures eventually exceed practical memory limits. Our algorithm offers an intermediate alternative, matching FP-Growth's performance while avoiding the construction of expensive global compression structures. Instead, it concentrates computation on statistically relevant regions of the search space without requiring global data structures that summarize the entire domain.

\medskip

\stitle{Column Scalability} Column scalability is particularly important in data lakes, where joins across multiple sources produce relations with hundreds of attributes. The advantage of our exploration strategy is evident here. Whereas Apriori and FP-Growth systematically explore predicate combinations that are often mutually independent, our method postpones exploration until there is statistical evidence suggesting dependence. This selective exploration substantially reduces the number of lattice nodes visited, mitigating the exponential growth inherent to exhaustive search. Importantly, this is achieved without imposing an explicit upper bound on predicate size: the algorithm is guaranteed by construction to discover all pairwise dependencies while continuing to explore deeper levels of the lattice whenever the observed evidence warrants it.

\begin{figure}
    \centering
    \Description{Scalability plots for \sys, Apriori and FP-Growth given an increasing amount of columns and rows.}
    \includegraphics[width=\linewidth]{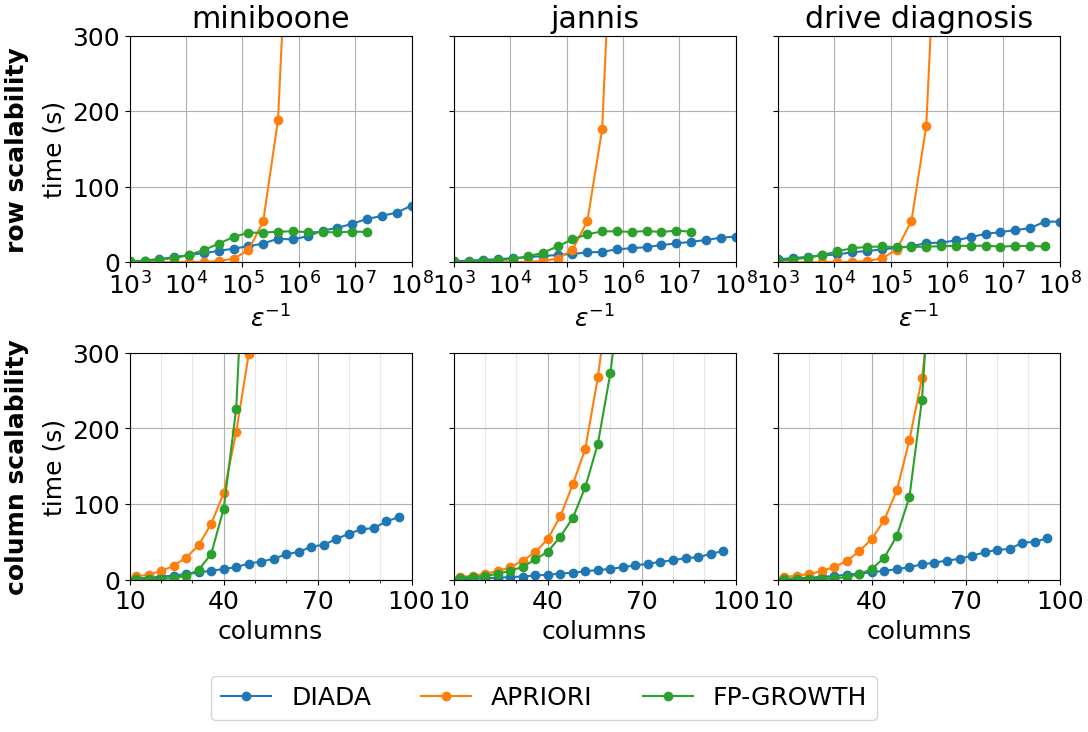}
    \caption{ Row/column scalability of \sys, Apriori and FP-Growth for estimating joint predicate probabilities of lattice predicate sets. Executions are repeated 5 times, each time evaluating on different random sets of columns, and the average is reported. Missing entries indicate memory overflows.}
    \label{fig:scalability}
\end{figure}

\subsection{Separating signal from noise}\label{sec:lima_vs_ufs}

\begin{figure*}[t]
	\begin{center}
		\includegraphics[width=1\linewidth]{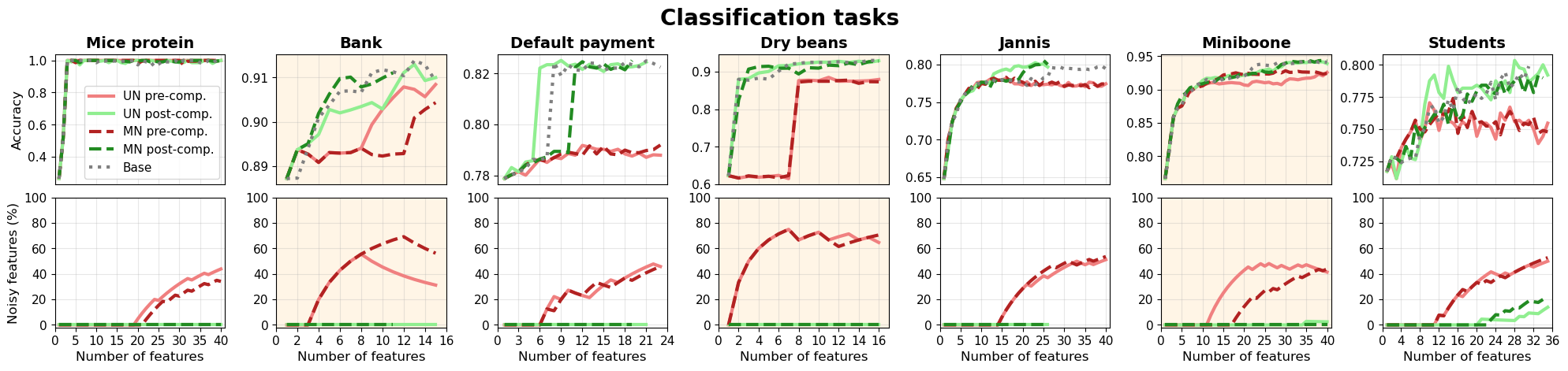}
        \includegraphics[width=1\linewidth]{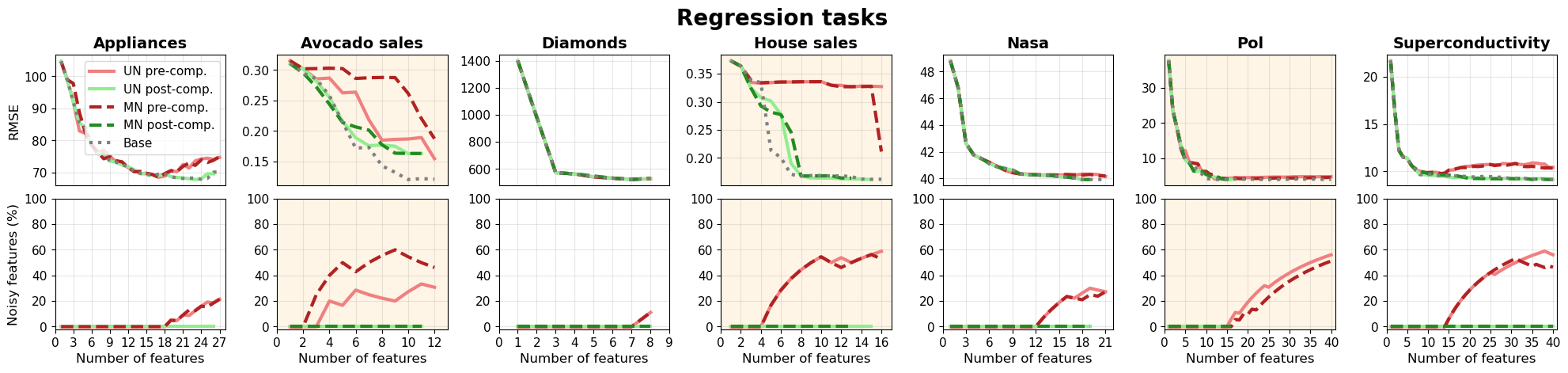}
		\caption{Model utility (top) and share of noise features selected (bottom) as ARDA selects more features, on representative benchmarks for both classification and regression tasks. \textit{Post-composition} lines stop at the width of the composed relation.}
		\label{fig:noise_removal_results}
	\end{center}
\end{figure*}

\sys~and UFS methods are the only alternatives that meet the composition requirements defined in Section \ref{sec:related_work}. Hence, we first evaluate their capacity to preserve meaningful features in the UN benchmarks. As building a ground truth from exact multivariate dependencies is unfeasible, we define meaningful features as those originally present in the benchmarks. These are popular prediction datasets, so dependencies between the target and a large subset of the attributes are assumed to be present by construction. Thus, the goal is to remove the injected noise columns while keeping the original attributes and the target. We run \sys~ and 14 popular UFS methods, classical and modern and with varied selection objectives, on all 17 benchmarks. Table~\ref{tab:feature_selection_results} presents the averaged results for the amount of selected noise columns, base attributes and targets selected across benchmarks.

\begin{table}[htbp]
    \caption{Analysis of composition quality. We report the average (± 95\% CI) percentage of noise columns, base columns and targets kept across all benchmarks.}
    \centering
    \small
    \begin{tabular}{lccc}
        \toprule
        Method & Noise (\%) & Original (\%) & Target (\%) \\
        \midrule
        \sys & \textbf{1.6 ± 2.3} & \textbf{91.5 ± 9.9} & \textbf{100.0} \\
        Cluster Dispersion \cite{dy2004feature} & 4.4 ± 6.8 & 27.3 ± 12.9 & 23.5 \\
        PFA \cite{lu2007feature} & 5.3 ± 5.4 & 40.3 ± 18.1 & 47.1 \\
        RSR \cite{zhu2015unsupervised} & 8.2 ± 9.3 & 5.0 ± 3.1 & 0.0 \\
        MCFS \cite{cai2010unsupervised} & 11.1 ± 13.8 & 12.5 ± 13.8 & 17.6 \\
        UDFS \cite{DBLP:conf/ijcai/YangSMHZ11} & 17.2 ± 10.6 & 34.2 ± 19.8 & 41.2 \\
        SPEC \cite{zhao2007spectral} & 20.6 ± 14.7 & 3.6 ± 2.4 & 5.9 \\
        NDFS \cite{li2012unsupervised} & 24.5 ± 21.0 & 3.4 ± 2.9 & 5.9 \\
        Cluster Representatives & 35.3 ± 7.7 & 13.7 ± 3.9 & 5.9 \\
        Laplacian Score \cite{he2005laplacian} & 44.2 ± 15.0 & 81.5 ± 16.1 & 88.2 \\
        AGUFS \cite{huang2021adaptive} & 62.6 ± 8.3 & 87.8 ± 13.9 & 82.4 \\
        AutoEncoder \cite{balin2019concrete} & 65.8 ± 6.2 & 13.9 ± 5.0 & 23.5 \\
        Variance Score & 66.7 ± 0.3 & 81.9 ± 8.3 & 82.4 \\
        InfFS \cite{roffo2015infinite} & 69.1 ± 14.0 & 25.6 ± 19.8 & 35.3 \\
        Correlation Uniqueness & 90.2 ± 8.3 & 7.3 ± 8.2 & 0.0 \\
        \bottomrule
    \end{tabular}
    \label{tab:feature_selection_results}
\end{table}

\sys~selects the lowest percentage of noise columns (1.6\%) while retaining the largest amount of original attributes (91.5\%). The two numbers must be read jointly: Cluster Dispersion or PFA keep noise comparatively low, but only by being so restrictive that most of the signal is discarded with it, whereas methods that retain most signal (e.g., Laplacian Score or AGUFS) admit 44--67\% noise. \sys~is also the only method that never discards the prediction target. Thus, UFS techniques are failing to preserve the only feature whose meaningfulness can be guaranteed. Two causes explain this behavior. First, UFS approaches assume the datasets' structure is worth preserving, so when the input's structure is contaminated by noise, the objective protects exactly what should be removed. Second, excessive parametrization: every method requires an assortment of hyperparameters with no principled criterion to set them, so performance varies considerably across benchmarks, as the wide confidence intervals showcase. \sys~avoids both problems by employing a statistical approach to assess multivariate attribute dependencies and using a single tuning parameter. As a result of preserving too much noise and removing too many meaningful attributes, UFS methods are not suitable for composition tasks and thus we only analyze \sys~further.

\subsection{Improving downstream pipelines}\label{sec:removing_noise}
The main goal of composition is to provide a series of meaningful relations that can benefit the performance of downstream tasks. In this experiment, we compare the results of ML modeling pipelines on the designed benchmarks (UN and MN) before (\textit{pre-}) and after (\textit{post-}) composition. To do so, we have defined a two-step modeling pipeline including a data augmentation system, ARDA~\cite{DBLP:journals/pvldb/ChepurkoMZFKK20}, to rank features by inferred relevance, and AutoML to train the models. Hence, we simulate a large-scale pipeline requiring automated processing due to the infeasibility of manually assessing feature relevance. We compare three inputs: (1) \textit{pre-composition}, where the UN and MN benchmarks reach the pipeline directly (i.e., the output of automated data discovery); (2) \textit{post-composition}, where the UN and MN benchmarks have been composed and, thus, only meaningful attributes are fed to the ML pipeline; and (3) \textit{base}, which represents the original datasets without any noise, serving as a baseline for the expected models' behavior. Note that for UN benchmarks the models ingest attributes in \textit{all} found components, whereas for MN they ingest \textit{only} the component containing the target variable. This showcases the difference between \textit{only} preserving statistically dependent features and \textit{further} composing the data into components. We evaluate model utility for an increasing amount of features selected ($k$), from 1 up to the minimum between the original width and 40. 
For representative benchmarks (Figure~\ref{fig:noise_removal_results}), we track model utility (top panels) and the percentage of noise columns selected (bottom panels). The remaining datasets, in our artifact repository, showcase the same trends.

The \textit{pre-composition} results show that without composition downstream tasks are unable to filter noise. Several benchmarks (those with a difficult target to predict) end up selecting up to 60\% of noisy features. This is a limitation of scalable methods to select features, as they rely on cheap, high-level proxies to assess relevance rather than on complex, but accurate techniques. This triggers widely studied issues in model creation (e.g., true predictors being collectively mimicked by noise variables when the marginal target signal is weak \cite{fan2008sure, fan2018discoveries}), so the resulting models rest on spurious associations. Even at high utility they generalize poorly under distribution shift, as the captured patterns do not describe replicable phenomena.

Composing first almost eliminates the problem. Consistently with Section~\ref{sec:lima_vs_ufs}, \textit{post-composition} models are built almost exclusively from original features, with the only notable exception being \textit{students}, likely due to the poor quality of its attributes that facilitates the generation of relationships between noise and original features. The utility for \textit{post-composition} models is never worse than \textit{pre-composition} models (e.g., \textit{diamonds}, \textit{nasa}) and often considerably better (e.g., \textit{default payment}, \textit{house sales}, \textit{students}). Moreover, the UN--MN gap on \textit{post-composition} models is generally minimal, confirming that the injected intra-correlated cluster is isolated and removed from modeling. In fact, some MN compositions keep fewer original features than the UN ones (e.g., \textit{bank}, \textit{default\_payment}) at no cost in utility, implying the existence of attribute groups of the original data that are internally related but independent of the target's relation. These are correctly filtered out. This is only visible in the MN benchmarks, where the ML model is fed solely the component that has the target. The only failed model is \textit{appliances} for the MN benchmark, whose composition leaves a single companion attribute in the target's relation, crippling the model. This is the result of the small percentage of error tolerated by statistical approaches, which can be addressed by lowering $\epsilon$. Overall, composition increases model performance, robustness and favors the detection of real patterns.

\begin{figure*}[t]
	\begin{center}
		\includegraphics[width=1\linewidth]{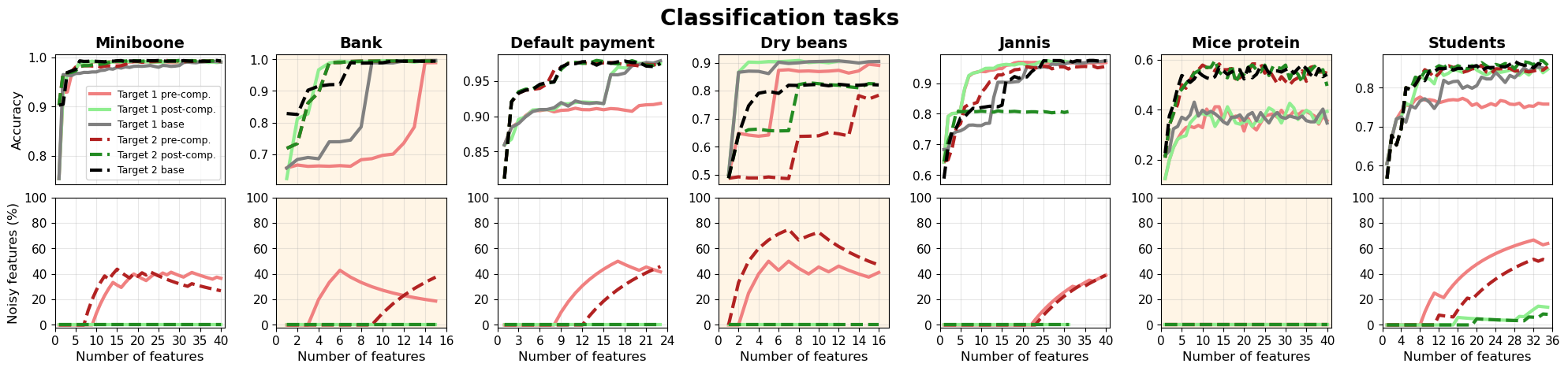}
        \includegraphics[width=1\linewidth]{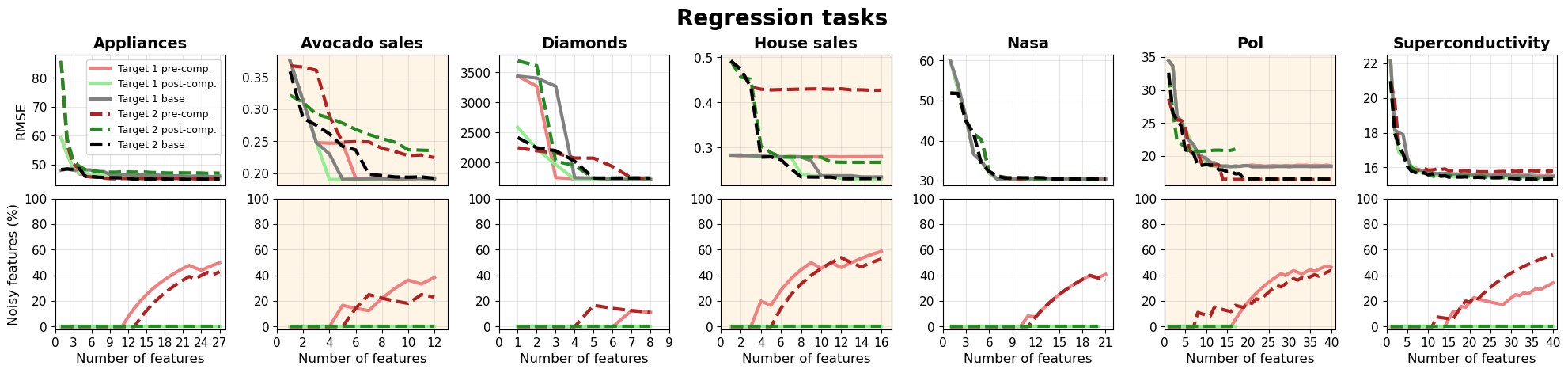}
		\caption{Model utility (top of each panel) and share of noise features selected (bottom) for two synthetic targets defined after composition, on the representative benchmarks of Figure~\ref{fig:noise_removal_results}. Remaining benchmarks in the artifact repository.}
		\label{fig:synth_targets_results}
	\end{center}
\end{figure*}

Against \textit{base}, \textit{post-composition} models perform, generally, to an equal degree, indicating that \sys~removes all the noise that negatively impacts the model. Moreover, \textit{post-composition} models perform better in some cases. On \textit{jannis} both compositions retain 26 of 54 features, and the model reaches 0.8 accuracy around 20 selected features whereas \textit{base} needs 26, with no gains beyond. Thus, composition is removing original features that provide no inference benefit \textit{without} having to rely on pairwise comparison with the target variable. \textit{pol} behaves similarly, and \textit{default payment} and \textit{house sales} do so to a lesser extent. The effect is not universal (\textit{superconductivity} and \textit{nasa} retain their full width), indicating either genuinely meaningful attributes (although redundant for the model) or redundancy that \sys~does not capture.

\subsection{One composition, any target}\label{sec:synth_targets}

To test whether one composition can serve several downstream tasks we generate two synthetic targets for each of the 17 UN benchmarks, apply a single composition process and check both models' behavior. The synthetic targets have been generated through a latent score drawn as a random linear combination over one half of the attributes for the first target and the other half for the second. This score is rank-transformed to $[0,1]$ for robustness and mapped to the appropriate task (e.g. sigmoid-thresholded for binary labels). Synthetic targets are, thus, learnable from the source attributes while distinct from the original labels. We repeat the protocol of Section~\ref{sec:removing_noise}, now for both targets, presenting \textit{pre-composition}, \textit{post-composition} and \textit{base} results in Figure~\ref{fig:synth_targets_results}.


As in Section \ref{sec:removing_noise}, \textit{post-composition} models perform close to \textit{base} models for both targets across most benchmarks, and, again, better than \textit{pre-composition} models, as the only noticeable exception is the second target of \textit{jannis}, which again exemplifies the small rate of error associated with statistical processes. Once more, noise is almost eradicated from the relations, thus favoring patterns that generalize to unseen data and not based on spurious associations. The composition was executed without knowledge of either target, implying that \sys~selects the attributes that characterize the relation, not the ones that explain a designated variable. Hence, a single composition process generates relations that can serve meaningful attributes to any subsequent analytical task.

\section{Conclusion} \label{sec:conclusion}
We introduced the problem of data composition: organizing a fragmented, heterogeneous data lake into meaningful relations without conditioning on specific targets. We propose an information-driven composition approach that approximates multivariate dependence by identifying sets of attributes that break the independence hypothesis. We map the attributes to a predicate space, which serves as an abstraction layer that decouples multivariate dependency analysis from the data's underlying types. The predicates form a lattice under inclusion, and we contribute a dedicated and scalable algorithm that mines dependent predicate sets. The algorithm is instantiated in \sys, the first system to fully satisfy all four data composition requirements (R1--R4). \sys~is able to efficiently scale beyond classical mining algorithms, thus allowing the detection of dependencies that would otherwise be impractical, and the target-agnostic compositions it produces benefit diverse downstream tasks by providing a subset of low-noise, statistically meaningful attributes. Future work aims to solve the loss of information due to predicate abstraction, add safeguards to prevent composition collapsing into a single component, and extend \sys~to heterogeneous domains and predicate spaces other than tuple pairs.


\begin{acks}
This work has been partly supported by the Horizon Europe Programme under GA.101135513 (CyclOps) and the Spanish Ministerio de Ciencia e Innovación under project PID2023-152841OA-I00 / AEI/10.13039/501100011033 (TALC). Marc Maynou is supported by Google's PhD Fellowship Program. Anna Queralt is a Serra Húnter Fellow. Albert Martin is funded by the predoctoral program AGAUR-FI grants (2025 FI-1 00967) Joan Oró, which is backed by the Department of Research and Universities of the Generalitat of Catalonia, as well as the European Social Plus Fund.
\end{acks}


\bibliographystyle{ACM-Reference-Format}
\bibliography{sample}

@article{nargesian2019data,
  title={Data lake management: challenges and opportunities},
  author={Nargesian, Fatemeh and Zhu, Erkang and Miller, Ren{\'e}e J and Pu, Ken Q and Arocena, Patricia C},
  journal={Proceedings of the VLDB Endowment},
  volume={12},
  number={12},
  pages={1986--1989},
  year={2019},
  publisher={VLDB Endowment}
}

@article{hai2023data,
  title={Data lakes: A survey of functions and systems},
  author={Hai, Rihan and Koutras, Christos and Quix, Christoph and Jarke, Matthias},
  journal={IEEE Transactions on Knowledge and Data Engineering},
  volume={35},
  number={12},
  pages={12571--12590},
  year={2023},
  publisher={IEEE}
}

@online{databricks2023medallion,
  author  = {{Databricks}},
  title   = {What Is the Medallion Lakehouse Architecture?},
  year    = {2023},
  url     = {https://www.databricks.com/glossary/medallion-architecture},
  urldate = {2026-06-21}
}

@article{abadi2020seattle,
  title={The Seattle report on database research},
  author={Abadi, Daniel and Ailamaki, Anastasia and Andersen, David and Bailis, Peter and Balazinska, Magdalena and Bernstein, Philip and Boncz, Peter and Chaudhuri, Surajit and Cheung, Alvin and Doan, AnHai and others},
  journal={ACM Sigmod Record},
  volume={48},
  number={4},
  pages={44--53},
  year={2020},
  publisher={ACM New York, NY, USA}
}

@inproceedings{fernandez2018seeping,
  title={Seeping semantics: Linking datasets using word embeddings for data discovery},
  author={Fernandez, Raul Castro and Mansour, Essam and Qahtan, Abdulhakim A and Elmagarmid, Ahmed and Ilyas, Ihab and Madden, Samuel and Ouzzani, Mourad and Stonebraker, Michael and Tang, Nan},
  booktitle={2018 IEEE 34th International Conference on Data Engineering (ICDE)},
  pages={989--1000},
  year={2018},
  publisher={IEEE}
}

@article{guyon2003introduction,
  title={An introduction to variable and feature selection},
  author={Guyon, Isabelle and Elisseeff, Andr{\'e}},
  journal={Journal of machine learning research},
  volume={3},
  number={Mar},
  pages={1157--1182},
  year={2003}
}

@article{nargesian2018table,
  title={Table union search on open data},
  author={Nargesian, Fatemeh and Zhu, Erkang and Pu, Ken Q and Miller, Ren{\'e}e J},
  journal={Proceedings of the VLDB Endowment},
  volume={11},
  number={7},
  pages={813--825},
  year={2018},
  publisher={VLDB Endowment}
}

@article{cui2026tabular,
  title={Tabular data augmentation for machine learning: Progress and prospects of embracing generative ai},
  author={Cui, Lingxi and Li, Huan and Chen, Ke and Shou, Lidan and Chen, Gang},
  journal={ACM Computing Surveys},
  volume={58},
  number={12},
  pages={1--39},
  year={2026},
  publisher={ACM New York, NY}
}

@article{huhtala1999tane,
  title={TANE: An efficient algorithm for discovering functional and approximate dependencies},
  author={Huhtala, Yka and K{\"a}rkk{\"a}inen, Juha and Porkka, Pasi and Toivonen, Hannu},
  journal={The computer journal},
  volume={42},
  number={2},
  pages={100--111},
  year={1999},
  publisher={OUP}
}

@article{mitra2002unsupervised,
  title={Unsupervised feature selection using feature similarity},
  author={Mitra, Pabitra and Murthy, Chivukula Anjaneya and Pal, Sankar K.},
  journal={IEEE transactions on pattern analysis and machine intelligence},
  volume={24},
  number={3},
  pages={301--312},
  year={2002},
  publisher={IEEE}
}

@article{solorio2020review,
  title={A review of unsupervised feature selection methods},
  author={Solorio-Fern{\'a}ndez, Sa{\'u}l and Carrasco-Ochoa, J Ariel and Mart{\'\i}nez-Trinidad, Jos{\'e} Fco},
  journal={Artificial intelligence review},
  volume={53},
  number={2},
  pages={907--948},
  year={2020},
  publisher={Springer}
}

@article{he2005laplacian,
  title={Laplacian score for feature selection},
  author={He, Xiaofei and Cai, Deng and Niyogi, Partha},
  journal={Advances in neural information processing systems},
  volume={18},
  year={2005}
}

@inproceedings{zhao2007spectral,
  title={Spectral feature selection for supervised and unsupervised learning},
  author={Zhao, Zheng and Liu, Huan},
  booktitle={Proceedings of the 24th international conference on Machine learning},
  pages={1151--1157},
  year={2007}
}

@inproceedings{cai2010unsupervised,
  title={Unsupervised feature selection for multi-cluster data},
  author={Cai, Deng and Zhang, Chiyuan and He, Xiaofei},
  booktitle={Proceedings of the 16th ACM SIGKDD international conference on Knowledge discovery and data mining},
  pages={333--342},
  year={2010}
}

@inproceedings{li2012unsupervised,
  title={Unsupervised feature selection using nonnegative spectral analysis},
  author={Li, Zechao and Yang, Yi and Liu, Jing and Zhou, Xiaofang and Lu, Hanqing},
  booktitle={Proceedings of the AAAI conference on artificial intelligence},
  volume={26},
  number={1},
  pages={1026--1032},
  year={2012}
}

@inproceedings{DBLP:conf/ijcai/YangSMHZ11,
  author    = {Yang, Yi and Shen, Heng Tao and Ma, Zhigang and Huang, Zi and Zhou, Xiaofang},
  title     = {$\ell_{2,1}$-Norm Regularized Discriminative Feature Selection for Unsupervised Learning},
  booktitle = {IJCAI international joint conference on artificial intelligence},
  year      = {2011},
  pages     = {1589--1594}
}

@article{watanabe1960information,
  title={Information theoretical analysis of multivariate correlation},
  author={Watanabe, Satosi},
  journal={IBM Journal of research and development},
  volume={4},
  number={1},
  pages={66--82},
  year={1960},
  publisher={IBM}
}

@article{DBLP:journals/tit/McGill54,
  author       = {William J. McGill},
  title        = {Multivariate information transmission},
  journal      = {Trans. {IRE} Prof. Group Inf. Theory},
  volume       = {4},
  pages        = {93--111},
  year         = {1954},
  url          = {https://doi.org/10.1109/TIT.1954.1057469},
  doi          = {10.1109/TIT.1954.1057469},
  bibsource    = {dblp computer science bibliography, https://dblp.org}
}

@inproceedings{bell2003co,
  title={The co-information lattice},
  author={Bell, Anthony J},
  booktitle={Proceedings of the fifth international workshop on independent component analysis and blind signal separation: ICA},
  volume={2003},
  year={2003}
}

@article{fan2008sure,
  title={Sure independence screening for ultrahigh dimensional feature space},
  author={Fan, Jianqing and Lv, Jinchi},
  journal={Journal of the Royal Statistical Society Series B: Statistical Methodology},
  volume={70},
  number={5},
  pages={849--911},
  year={2008},
  publisher={Oxford University Press}
}

@article{fan2018discoveries,
  title={Are discoveries spurious? Distributions of maximum spurious correlations and their applications},
  author={Fan, Jianqing and Shao, Qi-Man and Zhou, Wen-Xin},
  journal={Annals of statistics},
  volume={46},
  number={3},
  pages={989--1018},
  year={2018}
}

@inproceedings{bogatu2020dataset,
  title={Dataset discovery in data lakes},
  author={Bogatu, Alex and Fernandes, Alvaro AA and Paton, Norman W and Konstantinou, Nikolaos},
  booktitle={2020 ieee 36th international conference on data engineering (icde)},
  pages={709--720},
  year={2020},
  publisher={IEEE}
}

@article{vergara2014review,
  title={A review of feature selection methods based on mutual information},
  author={Vergara, Jorge R and Est{\'e}vez, Pablo A},
  journal={Neural computing and applications},
  volume={24},
  number={1},
  pages={175--186},
  year={2014},
  publisher={Springer}
}

@article{brown2012conditional,
  title={Conditional likelihood maximisation: a unifying framework for information theoretic feature selection},
  author={Brown, Gavin and Pocock, Adam and Zhao, Ming-Jie and Luj{\'a}n, Mikel},
  journal={The journal of machine learning research},
  volume={13},
  pages={27--66},
  year={2012},
  publisher={JMLR. org}
}

@article{huang2021adaptive,
  title={Adaptive graph-based generalized regression model for unsupervised feature selection},
  author={Huang, Yanyong and Shen, Zongxin and Cai, Fuxu and Li, Tianrui and Lv, Fengmao},
  journal={Knowledge-Based Systems},
  volume={227},
  pages={107156},
  year={2021},
  publisher={Elsevier}
}

@inproceedings{roffo2015infinite,
  title={Infinite feature selection},
  author={Roffo, Giorgio and Melzi, Simone and Cristani, Marco},
  booktitle={Proceedings of the IEEE international conference on computer vision},
  pages={4202--4210},
  year={2015}
}

@inproceedings{lu2007feature,
  title={Feature selection using principal feature analysis},
  author={Lu, Yijuan and Cohen, Ira and Zhou, Xiang Sean and Tian, Qi},
  booktitle={Proceedings of the 15th ACM international conference on Multimedia},
  pages={301--304},
  year={2007}
}

@article{zhu2015unsupervised,
  title={Unsupervised feature selection by regularized self-representation},
  author={Zhu, Pengfei and Zuo, Wangmeng and Zhang, Lei and Hu, Qinghua and Shiu, Simon CK},
  journal={Pattern Recognition},
  volume={48},
  number={2},
  pages={438--446},
  year={2015},
  publisher={Elsevier}
}

@inproceedings{balin2019concrete,
  title={Concrete autoencoders: Differentiable feature selection and reconstruction},
  author={Bal{\i}n, Muhammed Fatih and Abid, Abubakar and Zou, James},
  booktitle={International conference on machine learning},
  pages={444--453},
  year={2019},
  organization={PMLR}
}

@article{yuan2021novel,
  title={A novel unsupervised approach to heterogeneous feature selection based on fuzzy mutual information},
  author={Yuan, Zhong and Chen, Hongmei and Zhang, Pengfei and Wan, Jihong and Li, Tianrui},
  journal={IEEE Transactions on fuzzy systems},
  volume={30},
  number={9},
  pages={3395--3409},
  year={2022},
  publisher={IEEE}
}

@article{lim2021pairwise,
  title={Pairwise dependence-based unsupervised feature selection},
  author={Lim, Hyunki and Kim, Dae-Won},
  journal={Pattern Recognition},
  volume={111},
  pages={107663},
  year={2021},
  publisher={Elsevier}
}

@article{DBLP:journals/tkde/MaynouNPFRQ26,
  author       = {Marc Maynou and
                  Sergi Nadal and
                  Raquel Panadero and
                  Javier Flores and
                  Oscar Romero and
                  Anna Queralt},
  title        = {Freyja: Efficient Join Discovery in Data Lakes},
  journal      = {{IEEE} Trans. Knowl. Data Eng.},
  volume       = {38},
  number       = {4},
  pages        = {2277--2288},
  year         = {2026}
}

@article{DBLP:journals/pvldb/ChepurkoMZFKK20,
  author       = {Nadiia Chepurko and
                  Ryan Marcus and
                  Emanuel Zgraggen and
                  Raul Castro Fernandez and
                  Tim Kraska and
                  David R. Karger},
  title        = {{ARDA:} Automatic Relational Data Augmentation for Machine Learning},
  journal      = {Proc. {VLDB} Endow.},
  volume       = {13},
  number       = {9},
  pages        = {1373--1387},
  year         = {2020}
}

@article{DBLP:journals/tmlr/CappuzzoCLPV25,
  author       = {Riccardo Cappuzzo and
                  Aimee Coelho and
                  F{\'{e}}lix Lefebvre and
                  Paolo Papotti and
                  Ga{\"{e}}l Varoquaux},
  title        = {Retrieve, Merge, Predict: Augmenting Tables with Data Lakes},
  journal      = {Trans. Mach. Learn. Res.},
  volume       = {2025},
  year         = {2025}
}

@inproceedings{DBLP:conf/icde/IonescuVB0K24,
  author       = {Andra Ionescu and
                  Kiril Vasilev and
                  Florena Buse and
                  Rihan Hai and
                  Asterios Katsifodimos},
  title        = {AutoFeat: Transitive Feature Discovery over Join Paths},
  booktitle    = {{ICDE}},
  pages        = {1861--1873},
  publisher    = {{IEEE}},
  year         = {2024}
}

@inproceedings{DBLP:conf/sigmod/IlyasMHBA04,
  author       = {Ihab F. Ilyas and
                  Volker Markl and
                  Peter J. Haas and
                  Paul Brown and
                  Ashraf Aboulnaga},
  title        = {{CORDS:} Automatic Discovery of Correlations and Soft Functional Dependencies},
  booktitle    = {{SIGMOD} Conference},
  pages        = {647--658},
  publisher    = {{ACM}},
  year         = {2004}
}

@inproceedings{DBLP:conf/sigmod/ZhangGR20,
  author       = {Yunjia Zhang and
                  Zhihan Guo and
                  Theodoros Rekatsinas},
  title        = {A Statistical Perspective on Discovering Functional Dependencies in
                  Noisy Data},
  booktitle    = {{SIGMOD} Conference},
  pages        = {861--876},
  publisher    = {{ACM}},
  year         = {2020}
}

@inproceedings{DBLP:conf/sigmod/KenigMPSS20,
  author       = {Batya Kenig and
                  Pranay Mundra and
                  Guna Prasaad and
                  Babak Salimi and
                  Dan Suciu},
  title        = {Mining Approximate Acyclic Schemes from Relations},
  booktitle    = {{SIGMOD} Conference},
  pages        = {297--312},
  publisher    = {{ACM}},
  year         = {2020}
}

@article{DBLP:journals/pvldb/PapenbrockEMNRZ15,
  author       = {Thorsten Papenbrock and
                  Jens Ehrlich and
                  Jannik Marten and
                  Tommy Neubert and
                  Jan{-}Peer Rudolph and
                  Martin Sch{\"{o}}nberg and
                  Jakob Zwiener and
                  Felix Naumann},
  title        = {Functional Dependency Discovery: An Experimental Evaluation of Seven
                  Algorithms},
  journal      = {Proc. {VLDB} Endow.},
  volume       = {8},
  number       = {10},
  pages        = {1082--1093},
  year         = {2015}
}

@inproceedings{DBLP:conf/icde/FernandezAKYMS18,
  author       = {Raul Castro Fernandez and
                  Ziawasch Abedjan and
                  Famien Koko and
                  Gina Yuan and
                  Samuel Madden and
                  Michael Stonebraker},
  title        = {Aurum: {A} Data Discovery System},
  booktitle    = {{ICDE}},
  pages        = {1001--1012},
  publisher    = {{IEEE} Computer Society},
  year         = {2018}
}

@article{DBLP:journals/tkde/NargesianPBZM23,
  author       = {Fatemeh Nargesian and
                  Ken Q. Pu and
                  Bahar Ghadiri Bashardoost and
                  Erkang Zhu and
                  Ren{\'{e}}e J. Miller},
  title        = {Data Lake Organization},
  journal      = {{IEEE} Trans. Knowl. Data Eng.},
  volume       = {35},
  number       = {1},
  pages        = {237--250},
  year         = {2023}
}

@inproceedings{DBLP:conf/sigmod/AgrawalIS93,
  author       = {Rakesh Agrawal and
                  Tomasz Imielinski and
                  Arun N. Swami},
  title        = {Mining Association Rules between Sets of Items in Large Databases},
  booktitle    = {{SIGMOD} Conference},
  pages        = {207--216},
  publisher    = {{ACM} Press},
  year         = {1993}
}

@inproceedings{DBLP:conf/sigmod/HanPY00,
  author       = {Jiawei Han and
                  Jian Pei and
                  Yiwen Yin},
  title        = {Mining Frequent Patterns without Candidate Generation},
  booktitle    = {{SIGMOD} Conference},
  pages        = {1--12},
  publisher    = {{ACM}},
  year         = {2000}
}

@inproceedings{DBLP:conf/vldb/Toivonen96,
  author       = {Hannu Toivonen},
  title        = {Sampling Large Databases for Association Rules},
  booktitle    = {{VLDB}},
  pages        = {134--145},
  publisher    = {Morgan Kaufmann},
  year         = {1996}
}

@inproceedings{DBLP:conf/cikm/RiondatoDFU12,
  author       = {Matteo Riondato and
                  Justin A. DeBrabant and
                  Rodrigo Fonseca and
                  Eli Upfal},
  title        = {{PARMA:} a parallel randomized algorithm for approximate association
                  rules mining in MapReduce},
  booktitle    = {{CIKM}},
  pages        = {85--94},
  publisher    = {{ACM}},
  year         = {2012}
}

@article{DBLP:journals/ml/Webb07,
  author       = {Geoffrey I. Webb},
  title        = {Discovering Significant Patterns},
  journal      = {Mach. Learn.},
  volume       = {68},
  number       = {1},
  pages        = {1--33},
  year         = {2007}
}

@inproceedings{DBLP:conf/kdd/PellegrinaV18,
  author       = {Leonardo Pellegrina and
                  Fabio Vandin},
  title        = {Efficient Mining of the Most Significant Patterns with Permutation
                  Testing},
  booktitle    = {{KDD}},
  pages        = {2070--2079},
  publisher    = {{ACM}},
  year         = {2018}
}

@article{martin_discovering_2025,
	title = {Discovering {Approximate} {Denial} {Constraints} in {Large} {Databases}},
	volume = {19},
	issn = {2150-8097},
	number = {9},
	urldate = {2026-08-10},
	journal = {Proceedings of the VLDB Endowment},
	author = {Martin, Albert and De Almeida, Eduardo C. and Romero, Oscar and Queralt, Anna},
	month = {july},
	year = {2026},
	pages = {2086 -- 2098},
}

@article{martin_how_2025,
	title = {How and {Why} {False} {Denial} {Constraints} are {Discovered}},
	volume = {18},
	issn = {2150-8097},
	language = {en},
	number = {10},
	urldate = {2025-09-10},
	journal = {Proceedings of the VLDB Endowment},
	author = {Martin, Albert and De Almeida, Eduardo C. and Romero, Oscar and Queralt, Anna},
	month = {jun},
	year = {2025},
	pages = {3477--3489},
}

@article{AR_sample_riondato2014efficient,
  title={Efficient discovery of association rules and frequent itemsets through sampling with tight performance guarantees},
  author={Riondato, Matteo and Upfal, Eli},
  journal={ACM Transactions on Knowledge Discovery from Data (TKDD)},
  volume={8},
  number={4},
  pages={1--32},
  year={2014},
  publisher={ACM New York, NY, USA}
}

@book{bishop2006pattern,
  title={Pattern recognition and machine learning},
  author={Bishop, Christopher M},
  year={2006},
  publisher={Springer}
}

@article{chu2013discovering,
  title={Discovering denial constraints},
  author={Chu, Xu and Ilyas, Ihab F and Papotti, Paolo},
  journal={Proceedings of the VLDB Endowment},
  volume={6},
  number={13},
  pages={1498--1509},
  year={2013},
  publisher={VLDB Endowment}
}

@article{pena2019discovery,
  title={Discovery of approximate (and exact) denial constraints},
  author={Pena, Eduardo HM and De Almeida, Eduardo C and Naumann, Felix},
  journal={Proceedings of the VLDB Endowment},
  volume={13},
  number={3},
  pages={266--278},
  year={2019},
  publisher={VLDB Endowment}
}

@article{dy2004feature,
  title={Feature selection for unsupervised learning},
  author={Dy, Jennifer G and Brodley, Carla E},
  journal={Journal of machine learning research},
  volume={5},
  number={Aug},
  pages={845--889},
  year={2004}
}

@inproceedings{armbrust2021lakehouse,
  title={Lakehouse: a new generation of open platforms that unify data warehousing and advanced analytics},
  author={Armbrust, Michael and Ghodsi, Ali and Xin, Reynold and Zaharia, Matei and others},
  booktitle={11th Conference on Innovative Data Systems Research, {CIDR} 2021},
  year={2021}
}

\end{document}